\documentclass[11pt]{scrartcl}

\usepackage{url}
\usepackage[hidelinks]{hyperref}
\usepackage[utf8]{inputenc}
\usepackage[small]{caption}
\usepackage{graphicx}
\usepackage{amsmath}
\usepackage{booktabs}
\usepackage{amsthm}
\usepackage{amssymb}
\usepackage{pifont}
\usepackage{cleveref,bm,multirow}
\usepackage{color}
\usepackage{xspace}
\usepackage{tikz}
\usepackage{tabularx}
\usepackage{pbox}
\usepackage{framed}
\usepackage[shortlabels]{enumitem}
\usepackage{thm-restate}
\usepackage{nicematrix}
\usepackage{algorithm}
\usepackage{algorithmic}
\usepackage[most]{tcolorbox}
\usepackage{todonotes}

\usepackage{natbib}
\usepackage{babel}
\usepackage{subcaption}
\usetikzlibrary{arrows.meta}
\usetikzlibrary{positioning}
\renewcommand{\emptyset}{\varnothing}

\usepackage{cleveref}

\renewcommand{\emptyset}{\varnothing}

\newtheorem{theorem}{Theorem}[section]

\newtheorem{proposition}[theorem]{Proposition}
\newtheorem{lemma}[theorem]{Lemma}

\theoremstyle{definition}
\newtheorem{definition}[theorem]{Definition}
\newtheorem{example}[theorem]{Example}

\DeclareMathOperator*{\argmax}{arg\,max}

\newcommand{\mnwtie}{MNW$^\text{tie}$}

\newcommand{\therule}{\mathcal{F}}
\newcommand{\profile}{\mathcal{P}}
\newcommand{\instance}{\mathcal{I}}
\newcommand{\allocation}{\mathcal{A}}
\newcommand{\mnw}{MNW}

\newcommand*\circled[1]{\tikz[baseline=(char.base)]{
            \node[shape=circle,draw,inner sep=2pt] (char) {#1};}}

\newcommand\LL[1]{\multicolumn{1}{|c}{#1}}
\newcommand\RR[1]{\multicolumn{1}{c|}{#1}}

\newcommand*{\innerproofname}{Proof}
\newenvironment{innerproof}[1][\innerproofname]{\begin{proof}[#1]}{\end{proof}}

\allowdisplaybreaks

\title{Fair Division with Binary Valuations: Characterizations}

\author{
Florian Brandl\\University of Bonn
\and
Warut Suksompong\\National University of Singapore
\and
Nicholas Teh\\University of Oxford
}

\date{\vspace{-5ex}}

\begin{document}

\maketitle

\begin{abstract}
We consider the fair allocation of indivisible goods with binary valuations.
In this setting, the maximum Nash welfare rule, the leximin rule, and all additive welfarist rules with a strictly concave function coincide.
We show that for any number of agents, this rule is the only rule that satisfies envy-freeness up to one good, strategyproofness, neutrality, minimal completeness, and invariance under disapproving unassigned goods (IDU).
Moreover, we present an alternative characterization for two agents, where we replace IDU with non-redundancy and resource-monotonicity.
In both characterizations, all axioms are necessary.
\end{abstract}

\section{Introduction}
\label{sec:intro}

The fair allocation of resources is a central problem at the intersection of economics and computer science, with applications ranging from divorce settlement to inheritance division to load balancing \citep{BramsTa96,RobertsonWe98,Moulin03}.
A significant portion of recent work in the area investigates fairness considerations when allocating \emph{indivisible goods} such as jewelry, artwork, electronics, toys, and furniture \citep{GoldmanPr14,AmanatidisAzBi23}.

A key fairness desideratum is \emph{envy-freeness}, which states that each agent should value her own bundle of goods at least as much as every other agent's bundle. 
However, the trivial instance of two agents and a single valuable good shows that envy-freeness cannot always be attained (if this good must be allocated), so a relaxation is necessary.
One of the most intuitive relaxations is \emph{envy-freeness up to one good (EF1)}.
An allocation is said to be EF1 if any envy that an agent has towards another agent can be eliminated by removing a single good from the latter agent's bundle.
In addition to fairness, one may wish to achieve some form of efficiency.
A commonly studied efficiency property is \emph{Pareto-optimality (PO)}, which stipulates that no other allocation makes at least one agent better off without making another agent worse off.

A popular method for simultaneously achieving these two properties is the \emph{maximum Nash welfare (MNW)} rule, which selects an allocation maximizing the product of the agents' values. 
In a far-reaching paper, \citet{CaragiannisKuMo19} showed that under additive valuations, an MNW allocation always satisfies EF1 and PO.
Subsequently, \citet{YuenSu23} characterized MNW as the unique rule within the class of \emph{welfarist rules}---rules that choose an allocation maximizing some increasing function of the agents' values---that ensures EF1 under additive valuations.
This extends an earlier result by \citet{Suksompong23}, which provided this characterization only within the class of \emph{additive welfarist rules}, where the function can be expressed as a sum of some function of each agent's value (for MNW, this latter function is the logarithm function).

Despite these advantages, MNW has its drawbacks too. 
For example, it fails to be \emph{strategyproof}, which implies that an agent can sometimes benefit from misreporting her true preferences \citep{KlausMi02,HalpernPrPs20}.
It also violates \emph{resource-monotonicity}, so an agent could become worse off when an extra good is added to the pool \citep{ChakrabortyScSu21}.
Furthermore, finding or even approximating the maximum Nash welfare within a certain constant factor is a computationally hard problem \citep{lee2017apx}.
Note that many rules other than MNW satisfy EF1 and PO.
In particular, since EF1 and PO are properties of specific instances and do not connect different instances, we may consider an arbitrary instance and choose an EF1 and PO allocation different from the allocation chosen by MNW.
It thus appears to be a challenging task to identify a set of axioms that MNW satisfies but no other rule does.

An important subclass of additive valuations is the class of \emph{binary} valuations (sometimes referred to as \emph{binary additive} valuations), where each agent's value for each good is either $0$ or $1$.
Binary valuations can be viewed as approval votes, which have long been
studied in the voting literature \citep{brams2007approvalvoting,kilgour2010approval}, and permit very simple elicitation.
These valuations have therefore been investigated in
numerous fair division papers \citep{aleksandrov2015onlinefoodbank,darmann2015nashproduct,bouveret2016conflict,barman2018greedymnw,freeman2019equitable,HalpernPrPs20,hosseini2020infowithholding,kyropoulou2019groupallocation,amanatidis2021mnwefx,SuksompongTe22}.\footnote{Binary preferences have also been studied in matching \citep{BogomolnaiaMo04} and auctions \citep{MalikMi21}.}
In this subclass, the aforementioned drawbacks of MNW disappear: the rule is strategyproof \citep{HalpernPrPs20}, 
resource-monotone \citep{SuksompongTe22}, and can be computed in polynomial time \citep{darmann2015nashproduct,barman2018greedymnw}.

Before proceeding further, we highlight that under binary valuations, MNW coincides with the \emph{leximin} rule---which chooses an allocation that maximizes the smallest utility, then the second smallest utility, and so on---as well as any additive welfarist rule with a strictly concave function \citep{BenabbouChIg21}.\footnote{This includes the \emph{$p$-mean rule} for any (possibly negative) $p < 1$ \citep[Prop.~4.9]{CelineSuYu26}. 
\citet{EckartPsVe24} showed that for two agents with valuations that are additive but possibly non-binary, an allocation that maximizes the $p$-mean of normalized values for any $p\le 0$ satisfies EF1 and PO.
} Thus, our study should not be understood as distinguishing MNW from these equivalent rules, or as providing an argument in favor of MNW over leximin in the binary domain. 
Rather, for ease of exposition and consistency with existing work \citep{darmann2015nashproduct,barman2018greedymnw,HalpernPrPs20,amanatidis2021mnwefx}, we use the term ``MNW'' to refer to this common rule. 
While this convention gives a compact and familiar description of the rule, whenever we refer to MNW in the binary valuation domain, the corresponding statements apply equally to leximin as well as the equivalent strictly concave additive welfarist formulations.

In light of the advantages of MNW under binary valuations, one may be tempted to think that it is the indisputable choice in this domain.
However, it remains conceivable that there exist other rules which satisfy the same attractive properties (and perhaps more).
To see why the choice of rule is not obvious even under binary valuations, consider the following example.
\begin{example}
Consider two agents $N=\{1,2\}$ and eight goods $G = \{g_1,\dots,g_8\}$, where agent~$1$ approves $g_1,\dots,g_4$ and agent~$2$ approves $g_3,\dots,g_8$.
MNW gives $g_1,\dots,g_4$ to agent~$1$ and $g_5,\dots,g_8$ to agent~$2$.
However, the allocation that gives $g_1,g_2$ to agent~$1$ and $g_3,\dots,g_8$ to agent~$2$ is also envy-free (and therefore EF1) and PO.\footnote{Note that under binary valuations, every allocation satisfying PO maximizes utilitarian welfare (i.e., the sum of all agents' values), since such an allocation must assign each good to an agent who values it.
Hence, utilitarian welfare maximization cannot be used to distinguish among rules that ensure PO.}
Moreover, since properties like strategyproofness and resource-monotonicity connect different instances (i.e., each application of one of these properties involves two instances), they cannot be used to rule out the latter allocation in this instance, at least by reasoning about this instance alone.
\end{example}
Thus, in order to solidify the argument that MNW---or equivalently, the leximin rule and any additive welfarist rule with a strictly concave function---is a particularly compelling rule to use, it would be helpful to characterize it via sets of desirable properties.

\subsection{Our Contributions}

Throughout this paper, we assume that agents have binary valuations. 
In Section 3, we present a characterization of MNW with respect to five axioms: EF1, strategyproofness, neutrality, minimal completeness, and invariance under disapproving unassigned goods (IDU).
Neutrality is a basic property that states that the agents' received values should not change if the goods are simply relabeled.
Minimal completeness stipulates that all goods positively valued by at least one agent should be allocated and the remaining goods unallocated; a minimally complete version of MNW has been shown to satisfy attractive properties \citep{HalpernPrPs20,SuksompongTe22}.\footnote{On the other hand, if all goods must be allocated, no version of MNW can be strategyproof \citep{HalpernPrPs20}.}
IDU states that if an agent no longer approves a good that is not allocated to her, the returned allocation should remain the same.
Our characterization holds for any number of agents. 
We also show that all five properties are necessary---omitting any of them leads to a rule different from MNW that satisfies the remaining four properties.
Furthermore, we give a tie-breaking refinement of our characterization for the case of two agents by showing that, in addition to maximizing Nash welfare, the rule must break ties in a consistent manner across all instances with the same total number of goods and the same number of goods approved by at least one agent in this case.

Next, in Section 4, we provide an alternative characterization for the special case of two agents, by replacing IDU with non-redundancy (i.e., every allocated good should be given to an agent who values it) and resource-monotonicity.
In other words, we show that any rule satisfying EF1, strategyproofness, neutrality, minimal completeness, non-redundancy, and resource-monotonicity must maximize Nash welfare.
Again, we establish the independence among axioms as well as a tie-breaking refinement of our characterization.

To the best of our knowledge, these are the first characterizations of MNW within the space of all rules for any domain.
Together with prior work on the guarantees afforded by MNW under binary valuations, our results help cement MNW---or equivalently, the leximin rule and any additive welfarist rule with a strictly concave function---as the suitable allocation rule for this domain.

\section{Preliminaries}
\label{sec:prelims}

Let $N = [n]$ be the set of $n$ \emph{agents} and $G = \{g_1,g_2,\dots, g_m\}$ be the set of $m$ \emph{goods}, where $[k] := \{1,2,\dots,k\}$ for any positive integer $k$.
In \Cref{sec:alternative-characterization}, $G$ is not necessarily fixed, as extra goods may be added.
Each agent~$i \in N$ has a nonnegative \emph{valuation function}
$v_i$ over subsets of goods.
Every subset of goods in $G$ can be called a \emph{bundle}.

We assume that $v_i$ is \emph{binary additive}, meaning that $v_i(\{g\}) \in \{0,1\}$ for each $g\in G$ and $v_i(S) = \sum_{g \in S} v_i(\{g\})$ for any bundle $S\subseteq G$.
For simplicity, we sometimes write $v_i(g)$ instead of $v_i(\{g\})$ for $g \in G$.
We say that a good $g$ is \emph{unvalued} if $v_i(g) = 0$ for all agents $i \in N$, and \emph{valued} otherwise.
The list $\mathbf{v} = (v_1,\dots,v_n)$ is called a \emph{valuation profile}.
We define an \emph{approval profile} as the tuple $\profile = (P_1,\dots, P_n)$ such that for each $i\in N$, we have $g\in P_i$ if and only if $v_i(g) = 1$.
An \emph{instance} $\mathcal{I}=(N,G,\profile)$ is defined by the set of agents $N$, the set of goods $G$, and the approval profile $\profile$.

An allocation $\allocation = (A_1,\dots, A_n)$ is a list of disjoint bundles, where agent~$i \in N$ receives bundle $A_i$; let $\Pi_n(G)$ denote the set of all possible
allocations.
It is not necessary that $\bigcup_{i\in N}A_i = G$.
For each $n$, an \emph{allocation rule} for $n$ agents is a function $\therule$ that maps each instance with $n$ agents (and any set of goods) to an allocation.

A commonly studied fairness property is \emph{envy-freeness}, which states that each agent should value her assigned bundle at least as much as any other agent's bundle, that is, $v_i(A_i) \ge v_i(A_j)$ for all $i,j\in N$.
Since this is not always achievable when all (valued) goods must be allocated, a typical relaxation is \emph{envy-freeness up to one good (EF1)} \citep{Budish11,CaragiannisKuMo19}.\footnote{Under binary valuations, MNW is known to satisfy a stronger property called EFX$_0$ \citep{amanatidis2021mnwefx}. Our characterizations immediately carry over to EFX$_0$.}

\begin{definition}[Envy-freeness up to one good]
    An allocation $\allocation$ is \emph{envy-free up to one good (EF1)} if for all $i,j \in N$ with $A_j\neq\emptyset$, there exists a good $g \in A_j$ such that $v_i(A_i) \geq v_i(A_j \setminus \{g\})$.
\end{definition}

For efficiency, a basic notion is \emph{Pareto-optimality (PO)}.

\begin{definition}[Pareto-optimality]
    An allocation $\allocation$ is \emph{Pareto-optimal (PO)} if there does not exist another allocation $\allocation'$ such that $v_i(A'_i) \geq v_i(A_i)$ for all $i \in N$ and the inequality is strict for at least one $i \in N$.
\end{definition}

We also consider a related property called \emph{non-redundancy}, which states that each allocated good must be given to an agent who values it. 
This property has been used in several papers \citep{BabaioffEzFe21-dichotomous,BenabbouChIg21,BarmanVe22,SuksompongTe23,GokoIgKa24,MontanariScSu25,ViswanathanZi25}.\footnote{Some authors referred to this property as \emph{clean} or \emph{non-wasteful}.}

\begin{definition}[Non-redundancy]
    An allocation $\allocation$ is \emph{non-redundant} if $v_i(g) = 1$ for all $i \in N$ and $g \in A_i$.
\end{definition}

A slightly different property is minimal completeness \citep{HalpernPrPs20,SuksompongTe22}.

\begin{definition}[Minimal completeness]
An allocation $\allocation$ is \emph{minimally complete} if it allocates all valued goods and no unvalued goods.     
\end{definition}

One can check that while none of PO, non-redundancy, and minimal completeness implies another, the combination of any two implies the third.

We say that an allocation rule $\therule$ is EF1 (resp., PO, non-redundant, minimally complete) if for all instances $\mathcal{I}$, the allocation $\allocation = \therule(\mathcal{I})$ is EF1 (resp., PO, non-redundant, minimally complete).

Another important property of allocation rules is \emph{strategyproofness}, which stipulates that no agent should be able to misreport her preference so as to obtain a strictly higher value for herself. 
\begin{definition}[Strategyproofness]
    An allocation rule $\therule$ is \emph{strategyproof} if there do not exist valuation profiles $\mathbf{v}$ and $\mathbf{v}'$ with the following property: Denoting the instances corresponding to $\mathbf{v}$ and $\mathbf{v}'$ by $\mathcal{I}$ and $\mathcal{I}'$, respectively, there exists $i\in N$ such that
    \begin{itemize}
    \item $v_j = v'_j$ for all $j\in N\setminus\{i\}$;
    \item $v_i(A'_i) > v_i(A_i)$,
    \end{itemize}
    where $\allocation = \therule(\mathcal{I})$ and $\allocation' = \therule(\mathcal{I}')$.
\end{definition}

The next property, \emph{neutrality}, captures a different type of fairness in that the agents' received values should not depend on the labeling of the goods.

\begin{definition}[Neutrality]
    An allocation rule $\therule$ is \emph{neutral} if the following holds:    
    For any two instances $\mathcal{I}$ and $\mathcal{I}'$ such that $\mathcal{I}'$ can be obtained from $\mathcal{I}$ by permuting the labels of the goods, if $\therule(\mathcal{I}) = \allocation$ and $\therule(\mathcal{I}') = \allocation'$, then $v_i(A_i) = v_i(A'_i)$ for each $i \in N$.
\end{definition}

Note that the companion property of \emph{anonymity}, which asserts that the agents' values should not depend on the labeling of the \emph{agents}, cannot be satisfied along with minimal completeness or PO. Indeed, if there are two agents and one good that is valued by both agents, then one agent must receive value~$1$ and the other agent~$0$.

We now introduce a new property that we call \emph{invariance under disapproving unassigned goods (IDU)}.
This property states that if an agent no longer approves a good that is not allocated to her, then the allocation returned by the rule should remain the same.
IDU is similar in spirit to \emph{independence of irrelevant alternatives}, which is frequently considered in social choice settings \citep{Arrow50}.
Specifically, if an allocation rule does not assign good~$g$ to agent~$i$, then agent~$i$ lowering her value for~$g$ is irrelevant for achieving the allocation that the rule deems best.
Thus, agent $i$ ceasing to approve $g$ is an irrelevant change to the valuations (as deemed by the rule), and therefore the allocation should not change.
IDU also has the flavor of \emph{monotonicity} (i.e., an agent should not be able to obtain a good by ceasing to value it) as well as \emph{non-bossiness} (i.e., no agent can change the allocation of another agent without changing her own allocation) \citep{Thomson16}.
Several natural rules, including all welfarist rules and picking sequences, satisfy IDU (provided that tie-breaking is consistent).\footnote{Even for general additive valuations, these rules satisfy a natural generalization of IDU where if an agent lowers her value for a good not allocated to her to~$0$, the returned allocation should not change.}

\begin{definition}[Invariance under disapproving unassigned goods]
    An allocation rule $\therule$ is \emph{invariant under disapproving unassigned goods (IDU)} if the following holds:     
    For any two instances $\mathcal{I} = (N,G,\profile)$ and $\mathcal{I}' = (N,G,\profile')$ such that $\mathcal{I}$ differs from $\mathcal{I}'$ in that $g\in P_i$ and $g\not\in P'_i$ for some good~$g$, if $\therule(\mathcal{I}) = \allocation$ and $g\not\in A_i$, then $\therule(\mathcal{I}') = \therule(\mathcal{I})$.
\end{definition}

We observe that IDU and minimal completeness together imply PO, a fact that will be useful later.

\begin{lemma}\label{lem:PO}
    Any allocation rule (for any $n$) satisfying IDU and minimal completeness also satisfies PO.
\end{lemma}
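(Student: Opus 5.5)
The plan is to use the binary structure to reduce PO to a simple condition, and then derive that condition from IDU and minimal completeness by repeatedly removing approvals. The reduction is the following: under binary valuations, an allocation $\allocation$ is PO whenever every valued good is allocated to an agent who approves it. Indeed, such an allocation attains utilitarian welfare equal to the number of valued goods, which is the maximum possible. Any Pareto improvement $\allocation'$ would strictly increase the sum of values, which is impossible. So it suffices to show that for any instance $\mathcal{I}$, every valued good is assigned by $\allocation = \therule(\mathcal{I})$ to some agent who approves it.

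Minimal completeness already guarantees that every valued good is allocated. Suppose, for contradiction, that some valued good $g$ is assigned to an agent $i$ with $g \notin P_i$. Let $j_1,\dots,j_k$ be the agents approving $g$; all of them differ from $i$, and $k \ge 1$ since $g$ is valued. Construct instances $\mathcal{I}^{(0)} = \mathcal{I}, \mathcal{I}^{(1)}, \dots, \mathcal{I}^{(k)}$, where $\mathcal{I}^{(t)}$ is obtained from $\mathcal{I}^{(t-1)}$ by removing $g$ from $P_{j_t}$.

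By induction on $t$, I will show $\therule(\mathcal{I}^{(t)}) = \allocation$. The base case $t=0$ is immediate. For the inductive step, assume $\therule(\mathcal{I}^{(t-1)}) = \allocation$. In this allocation $g \in A_i$, and since the bundles are disjoint and $j_t \neq i$, we have $g \notin A_{j_t}$. Hence IDU applies to agent $j_t$ disapproving $g$, and gives $\therule(\mathcal{I}^{(t)}) = \allocation$.

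In $\mathcal{I}^{(k)}$ no agent approves $g$, so $g$ is unvalued. Yet $\therule(\mathcal{I}^{(k)}) = \allocation$ still assigns $g$ to agent $i$, which contradicts minimal completeness of $\therule$. Therefore every valued good goes to an agent who approves it, and PO follows by the reduction above.

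The only step needing care is the inductive application of IDU. It requires that $g$ is not in the disapproving agent's bundle at every stage, and this holds precisely because the allocation is frozen at $\allocation$ with $g \in A_i$ and $i$ never among the disapproving agents. Everything else is routine.
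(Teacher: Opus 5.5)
Your proof is correct and follows the same approach as the paper: strip the approvals of $g$ from the agents who value it (none of whom hold it), use IDU to keep the allocation fixed, and contradict minimal completeness once $g$ becomes unvalued. You are more careful than the paper in two places: you apply IDU one agent at a time, as its single-agent definition requires, and you spell out the utilitarian-welfare argument behind the paper's ``PO readily follows.''
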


\begin{proof}
    Consider a rule $\therule$ satisfying IDU and minimal completeness, and take any instance.
    Minimal completeness implies that all valued goods are allocated and all unvalued goods are unallocated.
    Suppose for contradiction that some valued good $g$ is allocated to an agent~$i$ who does not value it.
    Consider a modified instance where all other agents do not value $g$.
    By IDU, $g$ should still be allocated to~$i$.
    However, this contradicts minimal completeness, as $g$ is now unvalued.
    Hence, every valued good must be allocated to an agent who values it, and PO readily follows.
\end{proof}

One can check that the converse of \Cref{lem:PO} does not hold---in fact, PO implies neither minimal completeness nor IDU---and minimal completeness or IDU alone does not imply PO.

We now define the main rule of interest in this paper.

\begin{definition}[Maximum Nash welfare]
\label{def:MNW}
    Given an instance, an allocation $\mathcal{A}$ is a \emph{maximum Nash welfare (MNW) allocation} if, among the set of allocations in $\Pi_n(G)$, it maximizes the number of agents receiving positive value and, subject to that, maximizes the  product of positive values. 
    Formally, let $H(\mathcal{A}) = \{i \in N : v_i(A_i) > 0 \}$ and $\mathcal{H} = \argmax_{\mathcal{A} \in \Pi_n(G)} |H(\mathcal{A})|$. For any instance $\instance$, let \mnw($\instance$) $= \argmax_{\mathcal{A'}\in \mathcal{H}} \prod_{i \in H(\mathcal{A'})} v_i(A'_i)$ denote the set of all MNW allocations in $\instance$, and call $\mathcal{A}$ an MNW allocation if $\mathcal{A} \in$ \mnw($\instance$).     

    We say that a rule \emph{maximizes Nash welfare} (equivalently, is an MNW rule) if, given any instance, it always returns an MNW allocation.
    If there is more than one MNW allocation, the rule is allowed to break ties arbitrarily.
\end{definition}

\citet{HalpernPrPs20} proposed MNW$^{\text{tie}}$, which breaks ties for MNW by discarding unvalued goods and maximizing the agents' values lexicographically.
If we further break any remaining ties according to a lexicographic order over the goods, MNW$^{\text{tie}}$ satisfies EF1, PO, non-redundancy, minimal completeness, strategyproofness, neutrality, and IDU.
For an illustrating example, consider the following instance:
        \begin{center}
            \begin{tabular}{ c | c c c c c c }
            $\profile$ & $g_1$ & $g_2$ & $g_3$ & $g_4$ & $g_5$ & $g_6$ \\ 
             \hline
             $1$ & \circled{$1$} & \circled{$1$} & $1$ & \circled{$1$} & $0$ & $0$ \\  
             $2$ & $1$ & $1$ & \circled{$1$} & $0$ & \circled{$1$} & $0$ \\
            \end{tabular}
        \end{center}
First, MNW$^{\text{tie}}$ discards $g_6$ because it is unvalued, and allocates $g_4$ (resp., $g_5$) to agent~$1$ (resp., agent~$2$) because it is valued by only one agent.
As $g_1,g_2,g_3$ are valued by both agents and each agent already receives value~$1$ from the remaining goods, MNW$^{\text{tie}}$ must allocate two of these three goods to one agent and one to the other agent.
By the lexicographic tie-breaking over agents' values, MNW$^{\text{tie}}$ allocates two of $g_1,g_2,g_3$ to agent~$1$ and one to agent~$2$.
To break the remaining ties according to a lexicographic order over the goods, the rule allocates $g_1,g_2$ (i.e., the lowest-index goods) to agent~$1$ and $g_3$ to agent~$2$.
This results in the circled allocation.

\section{Main Characterization}

This section details our main result, which characterizes MNW as the only rule satisfying EF1, strategyproofness, neutrality, minimal completeness, and IDU under binary valuations.
We show that this characterization holds for any number of agents $n \geq 2$.

We first state a lemma of \citet{HalpernPrPs20}, which provides necessary and sufficient conditions for an allocation to be MNW.\footnote{This is Lemma 5 in the extended version of their work.}
A \emph{path} is a sequence of agents together with an allocation such that each agent (except the first) values some good in her predecessor's bundle, and the path is ``critical'' if the first agent values her own bundle at least two more than the last agent values his own bundle.
More precisely, given an instance $\instance = (N,G,\profile)$ and an allocation $\allocation$, a path in the pair $(\instance, \allocation)$ is a sequence of agents $(i_1,\dots,i_k)$ such that $A_{i_\ell} \cap P_{i_{\ell + 1}} \neq \emptyset$ for all $\ell \in \{1,\dots,k-1\}$. 
A path $(i_1,\dots, i_k)$ in $(\instance,\allocation)$ is \emph{critical} if $v_{i_1}(A_{i_1}) > v_{i_k}(A_{i_k}) + 1$.
For example, consider the following instance along with the circled allocation~$\allocation$:
        \begin{center}
            \begin{tabular}{ c | c c c c c c c c c }
            $\profile$ & $g_1$ & $g_2$ & $g_3$ & $g_4$ & $g_5$ & $g_6$ & $g_7$ & $g_8$ & $g_9$ \\ 
             \hline
             $1$ & \circled{$1$} & \circled{$1$} & \circled{$1$} & $1$ & $1$ & $0$ & $0$ & $0$ & $0$ \\  
             $2$ & $0$ & $0$ & $1$ & \circled{$1$} & \circled{$1$} & \circled{$1$} & \circled{$1$} & $0$ & $0$ \\
             $3$ & $0$ & $1$ & $0$ & $0$ & $1$ & $1$ & $0$ & \circled{$1$} & $0$ \\
             $4$ & $0$ & $1$ & $1$ & $0$ & $1$ & $0$ & $1$ & $0$ & \circled{$1$} \\
            \end{tabular}
        \end{center}
Observe that $(1,2,3)$ is a path, since agent~$2$ (resp., agent~$3$) values some good in agent~$1$'s (resp., agent~$2$'s) bundle.
Moreover, this path is critical, as $v_1(A_1) = 3 > 1 + 1 = v_3(A_3) + 1$.
On the other hand, $(1,2,3,4)$ is not a path, since agent~$4$ does not value any good in agent~$3$'s bundle.

\begin{lemma}[\citet{HalpernPrPs20}] \label{lem:mnw_conditions}
    For any $(\instance, \allocation)$ such that $\allocation$ is PO, it holds that $\allocation \in $ \emph{\mnw}($\instance$) if and only if there is no critical path in $(\instance,\allocation)$.
\end{lemma}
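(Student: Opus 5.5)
This is the critical-path characterization of \citet{HalpernPrPs20}, and I would prove both directions directly from \Cref{def:MNW}.

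\emph{Preliminary reduction.} Since $\allocation$ is PO and valuations are binary, every valued good is allocated to an agent who values it; otherwise we could reassign it and strictly improve someone. Unvalued goods play no role. Hence $v_i(A_i)=|A_i\cap P_i|$, and we may treat each agent's utility as the number of her approved goods she holds. In the same way, every allocation has at most the utilitarian welfare of $\allocation$, so all MNW allocations have exactly this utilitarian sum.

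\emph{Only if.} Suppose $(i_1,\dots,i_k)$ is a critical path. For each $\ell$, choose $g_\ell\in A_{i_\ell}\cap P_{i_{\ell+1}}$, and move $g_\ell$ from $i_\ell$ to $i_{\ell+1}$. The intermediate agents each lose one good and gain one approved good. I would first shorten the path to a simple one with distinct agents, which is possible by cutting out cycles while keeping the same endpoints. After shortening, agent $i_1$ drops from $a$ to $a-1$, agent $i_k$ rises from $b$ to $b+1$, and everyone else is unchanged, where $a>b+1$. Now compare with $\allocation$:
\begin{itemize}
\item If $b=0$ and $a\ge 2$, the number of agents with positive value increases, or stays equal while the product changes.
\item Otherwise $(a-1)(b+1)=ab+a-b-1>ab$, so the product of positive values strictly increases.
\end{itemize}
The case $b=0$ also needs the count argument: $|H|$ rises by one because $a-1\ge 1$. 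In either case $\allocation\notin$ \mnw($\instance$).

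\emph{If.} Suppose there is no critical path but $\allocation$ is not MNW, and let $\allocation^*$ be an MNW allocation; by the reduction we may assume it is non-redundant. I would use a flow/exchange argument. Consider the multigraph on agents with an arc $i\to j$ for each good in $A_i\cap A^*_j$ with $i\ne j$; every such good is approved by $j$. The out-degree minus in-degree of agent $i$ equals $v_i(A_i)-v_i(A^*_i)$, using that $\allocation$ and $\allocation^*$ allocate the same set of valued goods. Decompose the arcs into paths from agents with surplus to agents with deficit, plus cycles. Each such path is a path in $(\instance,\allocation)$ in the paper's sense.

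\emph{Main obstacle.} The hard step is to show that the absence of critical paths forces the vector $(v_i(A_i))_i$ to be at least as good as $(v_i(A^*_i))_i$ under MNW. My plan is to take $\allocation^*$ minimizing $\sum_i|v_i(A_i)-v_i(A^*_i)|$ among MNW allocations, and pick one decomposition path from $s$ to $t$ with $v_s(A_s)>v_s(A^*_s)$ and $v_t(A_t)<v_t(A^*_t)$. No critical path gives $v_s(A_s)\le v_t(A_t)+1$. Applying the reverse shift to $\allocation^*$ along this path yields $\allocation'$ with $v_t$ decreased by one and $v_s$ increased by one. Then
\[
v_t(A^*_t)\ge v_t(A_t)+1\ge v_s(A_s)\ge v_s(A^*_s)+1.
\]
So moving one unit from $t$ to $s$ in $\allocation^*$ weakly increases the Nash objective: $(x-1)(y+1)\ge xy$ when $x\ge y+1$, with care for zeros in the positive-count criterion. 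Hence $\allocation'$ is still MNW but strictly closer to $\allocation$, contradicting minimality. Therefore the distance is zero, and $\allocation$ has the same utility vector as an MNW allocation, so $\allocation$ is itself MNW. The delicate bookkeeping is the zero case: when $v_s(A^*_s)=0$, the shift may increase $|H|$, which is also fine, and $v_t(A^*_t)\ge 2$ ensures $t$ stays positive.
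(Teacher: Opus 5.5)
The paper does not prove this lemma. It imports it from \citet{HalpernPrPs20} (Lemma~5 of their extended version), so there is no in-paper argument to compare against, and your proposal has to stand as a self-contained proof. It does, and it is correct.

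The ``only if'' direction is the standard transfer along a critical path, shortcut to distinct agents. The ``if'' direction is an exchange argument against an MNW allocation $\mathcal{A}^*$ whose utility vector is closest to that of $\mathcal{A}$. The difference multigraph contains a surplus-to-deficit path, which is also a path in $(\instance,\allocation)$. The absence of critical paths then makes the reverse one-unit transfer weakly improving for $\mathcal{A}^*$. In effect you show that, on utility vectors of non-redundant allocations, being unimprovable by single-unit transfers between two agents already implies global optimality. Compared with a bare citation, your write-up shows exactly where PO of $\allocation$ enters: every valued good is held by an agent who approves it. That fact drives both the degree identity and the value bookkeeping along paths. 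It also shows how the positive-count tier of the MNW objective is handled.

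A few steps should be tightened, though none affects validity.
\begin{itemize}
\item You need that MNW allocations are themselves PO: if a valued good is not held by an approver, giving it to one raises $|H|$ or the product.
\begin{itemize}
\item This lets you take $\mathcal{A}^*$ to be non-redundant and to allocate every valued good, which your out-degree/in-degree identity relies on.
\item It is also what your claim that all MNW allocations have exactly the same utilitarian sum needs.
\end{itemize}
\item Shortcut the surplus-to-deficit path to distinct agents before the reverse shift, as you did in the other direction.
\item The zero case is slightly misstated. When $v_s(A^*_s)=0$ you only get $v_t(A^*_t)\ge 1$, not $\ge 2$.
\begin{itemize}
\item If $v_t(A^*_t)=1$, the shift just swaps which of $s,t$ has positive value. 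This leaves $|H|$ and the product unchanged, so $\mathcal{A}'$ is still MNW.
\item If $v_t(A^*_t)\ge 2$, then $|H|$ strictly increases, contradicting optimality of $\mathcal{A}^*$. So this subcase cannot occur, rather than being ``also fine''.
\end{itemize}
\item In the ``only if'' direction with $b=0$, $|H|$ strictly increases because $a-1\ge 1$. The ``or stays equal'' alternative never arises.
\end{itemize}
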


Then, our characterization result is as follows.
\begin{theorem}
\label{thm:main-characterization}
    For any fixed $n \ge 2$, under binary valuations, any allocation rule that is \emph{EF1}, strategyproof, neutral, minimally complete, and \emph{IDU} maximizes Nash welfare.
\end{theorem}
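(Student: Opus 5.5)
By \Cref{lem:PO}, any rule $\therule$ satisfying the axioms is PO. Hence, by \Cref{lem:mnw_conditions}, it suffices to show that for every instance $\instance$ the allocation $\allocation=\therule(\instance)$ has no critical path. We argue by contradiction. Suppose some instance has a critical path $(i_1,\dots,i_k)$ with $v_{i_1}(A_{i_1})\ge v_{i_k}(A_{i_k})+2$, and choose a counterexample minimizing a suitable measure, for instance the total number of approvals $\sum_i |P_i|$, and then the path length $k$.

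The first step uses IDU to strip the instance down. For every agent $i$ and every good $g\in P_i\setminus A_i$ that does not witness an edge of the path, we remove $g$ from $P_i$. By IDU the allocation is unchanged, and the path remains critical. Minimality then says there are no such approvals. In the reduced instance:
\begin{itemize}
\item each agent approves exactly her own bundle, plus, for the path agents $i_{\ell+1}$, a single good $h_\ell\in A_{i_\ell}$;
\item goods outside the path are approved only by their owners.
\end{itemize}
Minimal completeness ensures unvalued goods remain unallocated, so nothing else changes.

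The second step shortens the path to $k=2$ using strategyproofness and minimality. Consider the last agent $i_k$, with bundle value $b$. Since $i_k$ approves only $A_{i_k}\cup\{h_{k-1}\}$, we compare with the instance where $i_k$ additionally, or alternatively, declares some goods. Strategyproofness, applied in both directions between the true instance and a misreport, constrains how $i_k$'s received value can change. Combined with IDU for the other agents, this should let us transfer the ``deficit'' forward: either $i_{k-1}$ already has value at least $b+2$, giving a shorter critical path, or $v_{i_{k-1}}(A_{i_{k-1}})\le b+1$, and we move along the path. Concretely, if some intermediate agent $i_\ell$ satisfies $v_{i_\ell}(A_{i_\ell})\ge v_{i_k}(A_{i_k})+2$ or $v_{i_1}(A_{i_1})\ge v_{i_\ell}(A_{i_\ell})+2$, a shorter critical path exists. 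Otherwise all intermediate values lie in the window $[a-1,\,b+1]$, where $a=v_{i_1}(A_{i_1})$. In that case we aim to change the instance so that $i_1$ and $i_k$ become directly linked, with $i_k$ approving a good of $i_1$, while preserving the allocation via IDU and strategyproofness.

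The third step handles the base case: two agents $i=i_1$ and $j=i_k$ with $j$ approving $A_j\cup\{h\}$, $h\in A_i$, and $|A_i|\ge |A_j|+2$. Here we use neutrality and EF1.
\begin{itemize}
\item Consider the instance $\instance'$ in which $j$ approves all of $A_i\cup A_j$. By EF1 (with $i$ approving only $A_i$), $j$ must receive at least roughly half of the contested goods in $\instance'$: from $A_i$, since $j$'s value of $i$'s bundle minus one good is at most $j$'s own value.
\item Strategyproofness between $\instance$ and $\instance'$ then bounds $j$'s true value from below, since $j$ could report the bigger set.
\item Neutrality lets us relabel goods so that which goods of $A_i$ are approved does not matter. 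This forces $j$'s value in $\instance$ to be at least $\lceil(|A_i|+|A_j|)/2\rceil$-type bounds, which contradicts $|A_i|\ge|A_j|+2$ unless $j$ obtains $h$.
\end{itemize}
The careful use of strategyproofness goes in the direction where the true type is the smaller approval set: $j$ truly approving $A_j\cup\{h\}$ and reporting more must not gain. IDU then lets us reduce $j$'s approvals from $\instance'$ back down to $\instance$ without changing the outcome, as long as removed approvals are not in $A_j$.

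I expect the main obstacle to be the second step, reducing a long critical path to a direct two-agent conflict. Intermediate agents can block direct transfers, and the interplay of strategyproofness (which only bounds the deviator's value) with IDU (which only allows removing approvals of unassigned goods) is delicate. My first attempt would be induction on $k$: apply the two-agent argument to the pair $(i_{k-1},i_k)$ to show $v_{i_{k-1}}\le v_{i_k}+1$, and pass to the subpath.
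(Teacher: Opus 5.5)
\textbf{Gap in Step~2.} Your IDU stripping is fine, but Step~2, ruling out shortest critical paths of length $k\ge 3$, is never actually argued, and the induction you fall back on cannot work. For a shortest critical path, minimality already makes every proper subpath non-critical. In fact your window $[a-1,b+1]$ is a single point, because $a\ge b+2$. So the bundle sizes along the path are exactly $r+1,r,\dots,r,r-1$. A two-agent argument applied to $(i_{k-1},i_k)$ can at best give $v_{i_{k-1}}(A_{i_{k-1}})\le v_{i_k}(A_{i_k})+1$, which already holds. There is also no critical subpath to pass to: criticality compares the two endpoints, and every consecutive gap is at most $1$.

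What is needed is a way to create an edge that skips an intermediate agent while keeping the allocation fixed. Simply letting $i_k$ approve a good of an earlier agent does not achieve this by itself, for three reasons:
\begin{itemize}
\item $i_k$ may then receive that good;
\item strategyproofness only constrains $i_k$'s value with respect to her previous report;
\item IDU only covers removing approvals.
\end{itemize}
Your Step~2 also never uses neutrality, on which the paper's shortening argument relies essentially.

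\textbf{The missing idea.} The paper uses a two-stage modification at the end of the path. Write $g_\ell$ for the unique good of $A_{\ell-1}$ approved by agent $\ell$.
\begin{enumerate}
\item Let agent $k-2$ additionally approve $g_k$. If $k-2$ then receives $g_k$, PO plus a counting argument over agents $1,\dots,k-3$ yields a critical path shorter than $k$. Otherwise IDU shows the allocation is unchanged.
\item Now let agent $k$ additionally approve $g_{k-1}$. At this point $g_{k-1}$ and $g_k$ are approved by exactly the same agents, which is the purpose of the first stage.
\item Strategyproofness (with respect to $k$'s previous report) prevents $k$ from receiving $g_k$.
\item Suppose $k$ received $g_{k-1}$. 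Dropping $g_k$ from $k$'s report leaves the allocation unchanged by IDU. But the resulting profile is the previous one with $g_{k-1}$ and $g_k$ swapped, so neutrality forces $k$'s value to be $r-1$ rather than $r$, a contradiction.
\item Hence $k$ receives neither good, IDU restores the original allocation, and $(1,\dots,k-2,k)$ is a critical path of length $k-1$.
\end{enumerate}
With your approval-count measure you would have to re-strip this new instance, since it has more approvals. Minimizing the path length over all instances, as the paper does, avoids this.

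\textbf{Step~3.} By contrast, your Step~3 can be made rigorous, and it is shorter than the paper's induction that adds approvals one at a time. EF1 in $\mathcal{I}'$ forces $j$ to receive some $h'\in A_i$. The instance in which $j$ truly approves $A_j\cup\{h'\}$ is $\mathcal{I}$ with $h$ and $h'$ swapped. So neutrality gives $j$ value $|A_j|$ there, while reporting $A_i\cup A_j$ gives her true value $|A_j|+1$, contradicting strategyproofness. Two corrections:
\begin{itemize}
\item The bound you obtain is $|A_j|+1$, not one of order $(|A_i|+|A_j|)/2$.
\item Your closing IDU remark is wrong as stated: IDU only lets you drop approvals of goods $j$ does not receive in $\mathcal{I}'$. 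You do not need that remark, however.
\end{itemize}
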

\begin{proof}
    Let $\therule$ be a rule satisfying the five axioms in the theorem statement.
    Since $\therule$ is minimally complete and IDU, by \Cref{lem:PO}, we can assume that $\therule$ is PO.
    
    Suppose for contradiction that $\therule$ does not maximize Nash welfare.
    By Lemma \ref{lem:mnw_conditions} and the assumption that $\therule$ satisfies PO, there must exist an instance $\instance$ such that $(\instance, \therule(\instance))$ admits a critical path.
    By PO and minimal completeness, for an allocation $\allocation$ returned by $\therule$ in any instance, we have $v_i(A_i) = |A_i|$ for all $i \in N$.

    Let $k$ be the length of a shortest critical path across all instances, and let $\instance = (N,G,\profile)$ and $\allocation = \therule(\instance)$ be such an instance and its corresponding allocation under $\therule$.
    Then, we have that $(\instance, \allocation)$ admits a critical path $(i_1, \dots, i_k$) of length $k$. Without loss of generality, let $i_\ell = \ell$ for all $\ell \in [k]$.

    Since $(1,\dots, k)$ is a critical path, by definition of a critical path, it holds that
    \begin{equation} \label{eqn:A1>Ak+1}
        |A_1| > |A_k|+1.
    \end{equation}
    We split our analysis into two cases.

    \paragraph{Case 1: $k \geq 3$.}
    
    By minimal completeness and IDU, we may assume that 
    \begin{enumerate}[(i)]
        \item for all $i \in [k]$ and $j \in N \setminus [k]$, $P_i \cap A_j = \emptyset$ and $P_j \cap A_i = \emptyset$;
        \item for all $\ell \in [k]\setminus\{1\}$, $|P_\ell| = |A_\ell| + 1$ and $|P_\ell \cap A_{\ell-1}|=1$; and
        \item $P_1 = A_1$.
    \end{enumerate}
    Indeed, for (i) and (iii), we may let, e.g., each agent~$i\in [k]$ disapprove all goods in the bundle $A_j$ for every $j\in N\setminus [k]$; the allocation output by $\therule$ does not change due to IDU.
    Similarly, for (ii), we let each agent~$\ell\in[k]\setminus\{1\}$ cease to approve all goods in $A_{\ell-1}$ except one; the existence of the latter good follows from the definition of a path.
    For $\ell \in [k] \setminus \{1\}$, let $g_\ell$ be the unique good in $P_\ell \cap A_{\ell-1}$.

    If $|A_1| -1 > |A_{\ell}|$ for some $\ell \in [k-1]\setminus \{1\}$, then $(1,\dots,\ell)$ is a critical path shorter than $(1,\dots,k)$, a contradiction.
    On the other hand, if $|A_1|-1 < |A_\ell|$, then since $|A_1| > |A_k|+1$ (by definition of a critical path), we get $|A_\ell| \ge |A_1| > |A_k| + 1$, which means that $|A_\ell| - 1 > |A_k|$ and so $(\ell,\dots,k)$ is a critical path shorter than $(1,\dots,k)$, again a contradiction.
    Thus, we must have
    \begin{equation} \label{eqn:case1_A1-1=Al}
        |A_1| - 1 = |A_\ell| \quad \text{for all} \quad \ell \in [k-1] \setminus \{1\}.
    \end{equation}
    Now, if $|A_k| \geq |A_\ell|$ for some $\ell \in [k-1] \setminus \{1\}$, then we get $|A_1| - 1 = |A_\ell|  \leq |A_k|$, contradicting (\ref{eqn:A1>Ak+1}).
    Therefore, it must be that $|A_k| < |A_\ell|$, or equivalently,
    \begin{equation*}
        |A_k| + 1 \leq |A_\ell| \quad \text{for all} \quad \ell \in [k-1] \setminus \{1\}.
    \end{equation*}
    Then, if $|A_k| + 1 < |A_\ell|$ for some $\ell \in [k-1]\setminus \{1\}$, we again have that $(\ell, \dots, k)$ is a critical path shorter than $(1,\dots,k)$, a contradiction.
    Hence,
    \begin{equation*}
        |A_k| + 1 = |A_\ell| \quad \text{for all} \quad \ell \in [k-1] \setminus \{1\}.
    \end{equation*}    
    Combining this with (\ref{eqn:case1_A1-1=Al}), we get
    \begin{equation*} 
        |A_1| - 1 = |A_2| = \dots = |A_{k-1}| = |A_k| + 1 := r.
    \end{equation*}
    The valuations and corresponding bundles of agents in the critical path of $(\instance, \allocation)$ are illustrated in \Cref{tab:m=3_P}.

    \begin{table*}[t]
    \begin{center}
        \resizebox{\columnwidth}{!}{
        \begin{tabular}{ c | c c c c c c c c c c c c c c c c c | c }
        \multirow{2}{*}{ $\profile$
        }
         & \multicolumn{3}{c}{$\overbrace{\smash{}\hspace{2cm}\smash{}}^{A_1 = P_1}$} & \multicolumn{3}{c}{$\overbrace{\smash{}\hspace{2cm}\smash{}}^{A_2 = P_2 \setminus \{g_2\}}$} & & & \multicolumn{3}{c}{$\overbrace{\smash{}\hspace{2.8cm}\smash{}}^{A_{k-2} = P_{k-2} \setminus \{g_{k-2}\}}$} & \multicolumn{3}{c}{$\overbrace{\smash{}\hspace{2.8cm}\smash{}}^{A_{k-1} = P_{k-1} \setminus \{g_{k-1}\}}$} & \multicolumn{3}{c|}{$\overbrace{\smash{}\hspace{2.2cm}\smash{}}^{A_{k} = P_{k} \setminus \{g_k\}}$} &\multirow{1}{*}{bundle}
         \\
          &  & & \rotatebox{90}{$g_2$} & & & \rotatebox{90}{$g_3$} & $\cdots$ & \rotatebox{90}{$g_{k-2}$} & & & \rotatebox{90}{$g_{k-1}$} & & & \rotatebox{90}{$g_k$} & & & & size/value\\ 
         \hline
         $1$ & \LL{$1$} & $\cdots$ & \RR{$1$} & & & & & & & & & & & & & & & $r+1$ \\  
         \cline{2-4}\cline{5-7}
         $2$ & & & $1$ & \LL{$1$} & $\cdots$ & \RR{$1$} & & & & & & & & & & & & $r$\\
         \cline{5-7}
         \vdots & & & & & & & $\ddots$ & & & & & & & & & & & \vdots\\
          \cline{10-12}
         $k-2$ & & & & & & & & $1$ & \LL{$1$} & \hspace{2.5mm}$\cdots$ & \RR{$1$} & & & & & & & $r$\\
         \cline{10-12} \cline{13-15}
         $k-1$ & & & & & & & & & & & $1$ & \LL{$1$} & \hspace{2.5mm}$\cdots$ & \RR{$1$} & & & & $r$\\
         \cline{13-15} \cline{16-18}
         $k$ & & & & & & & & & & & & & & $1$ & \LL{$1$} & $\cdots$ & \RR{$1$} & $r-1$\\
         \cline{16-18}
        \end{tabular}
        }
        \caption{Profile $\profile$ for instance $\instance$ in Case~1 of the proof of \Cref{thm:main-characterization}. Empty cells indicate a value of $0$.\label{tab:m=3_P}}
    \end{center}
    \end{table*}

    We will show that $k$ cannot be the length of a shortest critical path across all profiles, by constructing a profile with a critical path of length strictly less than $k$, thereby arriving at a contradiction.

    Consider another instance $\instance' = (N, G, \profile')$ with profile $\profile' = (P'_1,\dots, P'_n)$ such that
    \begin{equation*}
        P'_{k-2} = P_{k-2} \cup \{g_k\} \quad \text{and} \quad P'_\ell = P_\ell \text{ for all } \ell \in N \setminus \{k-2\}.
    \end{equation*}
    Let $\allocation' = \therule(\instance')$ be the corresponding allocation returned by $\therule$ on this instance $\instance'$.        
    The valuations of agents $\{k-2,k-1,k\}$ in the critical path of $(\instance, \allocation)$ under the new instance $\instance'$ are illustrated as follows. Empty cells indicate that the agent has a value of $0$ for the good, and the valuations of agents $\{1,\dots,k-3\}$ remain the same as in $\profile$.
    \begin{center}
        \resizebox{\columnwidth}{!}{
        \begin{tabularx}{\linewidth}{ c | >{\centering\arraybackslash}X >{\centering\arraybackslash}X >{\centering\arraybackslash}X >{\centering\arraybackslash}X >{\centering\arraybackslash}X >{\centering\arraybackslash}X >{\centering\arraybackslash}X >{\centering\arraybackslash}X >{\centering\arraybackslash}X >{\centering\arraybackslash}X >{\centering\arraybackslash}X}
         $\profile'$ & $\cdots$ & \rotatebox{90}{$g_{k-2}$} & &  & \rotatebox{90}{$g_{k-1}$} & & & \rotatebox{90}{$g_k$} \\ 
         \hline
         \vdots & $\ddots$ & & & & & & & & & & \\  
         $k-2$ & & $1$ & $1$ & $\cdots$ & $1$ & & & $\mathbf{1}$ & & & \\  
         $k-1$ & & & & & $1$ & $1$ & $\cdots$ & $1$ & & & \\
         $k$ & & & & & & & & $1$ & $1$ & $\cdots$ & $1$ \\
        \end{tabularx}
        }
        \end{center}
        We consider two further cases, based on whether $g_k$ is contained in $A'_{k-2}$.

        \textbf{Case 1(a): $g_k \in A'_{k-2}$.}
        We know that $P'_k = P_k = A_k \cup \{g_k\}$ and agent~$k$ is the only agent who values goods in $A_k$ (by (i)).
        Hence, by PO, we have that $|A'_k| = |A_k| = r-1$.

        If $g_{k-2} \in A'_{k-2}$ (when $k \ge 4$) or $g_{k-1} \in A'_{k-2}$, then since 
        agent~$k-2$ is the only agent who values goods in $A_{k-2} \setminus \{g_{k-1}\}$, by PO, we have that $|A'_{k-2}| \geq |A_{k-2} \cup \{g_k\} \setminus \{g_{k-1}\}| + 1 \ge r+1$.
        This means that $(k-2, k)$ forms a critical path of length $2$, contradicting the assumption that $k \geq 3$ is the length of a shortest critical path across all profiles.
        Thus, we must have that $g_{k-2} \notin A'_{k-2}$ (when $k\ge 4$) and $g_{k-1} \notin A'_{k-2}$.
        If $k = 3$, then $(k-2, k)$ forms a critical path of length~$2$ regardless, so it must hold that $k\ge 4$.
        Then, we have $|A'_{k-2}| = r$.

        Recall that $|A_1| = r+1$ and $|A_\ell| = r$ for all $\ell \in \{2,\dots, k-3\}$.
        Since $g_{k-2}\not\in A'_{k-2}$, this means that $|A'_1 \cup \dots \cup A'_{k-3}| = r+1 + (k-4) \cdot r = (k-3) \cdot r + 1$.
        Thus, by the pigeonhole principle, there must exist an agent~$\ell \in \{1,\dots,k-3\}$ such that $|A'_\ell| = r+1$ (by PO, we cannot have $|A'_\ell| > r+1$).
        Let $\ell$ be the largest such index in $\{1,\dots,k-3\}$.
        
        If there exists an agent~$\ell' \in \{\ell+1,\dots, k-3\}$ with $|A'_{\ell'}| = r-1$ (by PO, we cannot have $|A'_{\ell'}| < r-1$), then let $\ell'$ be the smallest such index in $\{\ell+1,\dots, k-3\}$. 
        Consequently, there exists a critical path $(\ell, \ell+1,\dots, \ell')$ with length strictly less than $k$, contradicting the assumption that $k$ is the length of a shortest critical path across all profiles.
        Otherwise, all agents $\ell' \in \{\ell+1,\dots, k-3\}$ have $|A'_{\ell'}| = r$. Then, together with the fact that $|A'_{k-2}| = r$, there exists a critical path $(\ell, \ell+1, \dots, k-2, k)$ with length strictly less than $k$, again giving us a contradiction.
        
        \textbf{Case 1(b): $g_k \notin A'_{k-2}$.}
        IDU implies that 
        \begin{equation} \label{eqn:case1b_A=A'}
            \allocation = \allocation'.
        \end{equation}

        Consider another instance $\instance'' = (N, G, \profile'')$ with profile $\profile'' = (P''_1,\dots, P''_n)$ such that
        \begin{equation*}
            P''_k = P'_k \cup \{g_{k-1}\} \quad \text{and} \quad P''_\ell = P'_\ell \text{ for all } \ell \in N \setminus \{k\}.
        \end{equation*}
        Let $\allocation'' = \therule(\instance'')$ be the corresponding allocation returned by $\therule$ on this instance $\instance''$.
        The valuations of agents $\{k-2,k-1,k\}$ in the critical path of $(\instance, \allocation)$ under the new instance $\instance''$ are illustrated as follows. The valuations of agents $\{1,\dots,k-3\}$ remain the same as in $\profile'$ (and $\profile$).
        \begin{center}
        \resizebox{\columnwidth}{!}{
        \begin{tabularx}{\linewidth}{ c | >{\centering\arraybackslash}X >{\centering\arraybackslash}X >{\centering\arraybackslash}X >{\centering\arraybackslash}X >{\centering\arraybackslash}X >{\centering\arraybackslash}X >{\centering\arraybackslash}X >{\centering\arraybackslash}X >{\centering\arraybackslash}X >{\centering\arraybackslash}X >{\centering\arraybackslash}X}
         $\profile''$ & $\cdots$ & \rotatebox{90}{$g_{k-2}$} & & & \rotatebox{90}{$g_{k-1}$} & & & \rotatebox{90}{$g_k$} \\ 
         \hline
         \vdots & $\ddots$ & & & & & & & & & & \\  
         $k-2$ & & $1$ & $1$ & $\cdots$ & $1$ & & & 1 & & & \\  
         $k-1$ & & & & & $1$ & $1$ & $\cdots$ & $1$ & & & \\
         $k$ & & & & & $\mathbf{1}$ & & & $1$ & $1$ & $\cdots$ & $1$ \\
        \end{tabularx}
        }
        \end{center}        
        Note that under instance $\instance''$, $g_{k-1}$ and $g_k$ are approved by the same set of agents (namely, $\{k-2,k-1,k\}$).
        
        Strategyproofness for agent~$k$ at profile $\mathcal P'$ implies that
        \begin{equation} \label{eqn:case1b_sp_A''k}
            |A''_k \cap P'_k| \leq r-1;
        \end{equation} 
        otherwise agent~$k$ can manipulate from $P'_k$ to $P''_k$ and strictly benefit.
        Moreover, since agent~$k$ is the only agent who approves goods in $A_k = A'_k = P'_k \setminus \{g_k\}$, by PO and (\ref{eqn:case1b_sp_A''k}), $A''_k \cap P'_k = A'_k$ (i.e., $g_k \notin A''_k$).
        
        Suppose for contradiction that $A''_k \neq A'_k$.
        Then, by PO,
        \begin{equation} \label{eqn:case1b_A''k=A'k+gk-1}
            A''_k = A'_k \cup \{g_{k-1}\}.
        \end{equation}
        Let $\pi$ be the permutation of $G$ swapping $g_{k-1}$ and $g_k$ and fixing all other goods.
        Consider the instance $\instance''' = (N, G, \profile''')$ with profile $\profile''' = \pi(\profile')$, and let $\allocation''' = \therule(\instance''')$ be the corresponding allocation returned by $\therule$ on the instance $\instance'''$.
        The valuations of agents $\{k-2,k-1,k\}$ in the critical path of $(\instance, \allocation)$ under the new instance $\instance'''$ are illustrated as follows. The valuations of agents $\{1,\dots,k-3\}$ remain the same as in $\profile'$ (and $\profile$).
        \begin{center}
        \resizebox{\columnwidth}{!}{
        \begin{tabularx}{\linewidth}{ c | >{\centering\arraybackslash}X >{\centering\arraybackslash}X >{\centering\arraybackslash}X >{\centering\arraybackslash}X >{\centering\arraybackslash}X >{\centering\arraybackslash}X >{\centering\arraybackslash}X >{\centering\arraybackslash}X >{\centering\arraybackslash}X >{\centering\arraybackslash}X >{\centering\arraybackslash}X}
         $\profile'''$ & $\cdots$ & \rotatebox{90}{$g_{k-2}$} & & & \rotatebox{90}{$g_{k-1}$} & & & \rotatebox{90}{$g_k$} \\ 
         \hline
         \vdots & $\ddots$ & & & & & & & & & & \\  
         $k-2$ & & $1$ & $1$ & $\cdots$ & $1$ & & & $1$ & & & \\  
         $k-1$ & & & & & $1$ & $1$ & $\cdots$ & $1$ & & & \\
         $k$ & & & & & $1$ & & & & $1$ & $\cdots$ & $1$ \\
        \end{tabularx}
        }
        \end{center}
        Note that $\profile'''$ can be derived from $\profile''$ by replacing $P''_k$ with $P''_k \setminus \{g_k\}$.
        
        Now, we have that
        \begin{equation} \label{eqn:case1b:A''k=Ak+gk-1}
            A''_k = A'_k \cup \{g_{k-1}\} = A_k \cup \{g_{k-1}\},
        \end{equation}
        where the equalities follow from (\ref{eqn:case1b_A''k=A'k+gk-1}) and (\ref{eqn:case1b_A=A'}), respectively.
        Since $g_k \notin A_k$, it holds that $g_k \notin A''_k$.
        Then, by IDU between $\profile''$ and $\profile'''$, we have $\allocation''' = \allocation''$.
        Combining this with~(\ref{eqn:case1b:A''k=Ak+gk-1}), we get that 
        \begin{equation} \label{case1b_contradiction}
            A'''_k = A'_k \cup \{g_{k-1}\}.
        \end{equation}
        On the other hand, since $\profile''' = \pi(\profile')$, neutrality implies that $|A_k'''| = |A_k'|$, a contradiction with~(\ref{case1b_contradiction}).
        Hence, $A''_k = A'_k$.
        
        By IDU between $\profile''$ and $\profile'$, we have
        \begin{equation} \label{eqn:case1b_A''=A'}
            \allocation'' = \allocation'.
        \end{equation}
        Since $g_{k-1} \in A_{k-2}$ and $\allocation = \allocation''$ (by (\ref{eqn:case1b_A=A'}) and (\ref{eqn:case1b_A''=A'})), it follows that $g_{k-1} \in A''_{k-2}$.
        Moreover, since $g_{k-1} \in P''_k$, the path $(1,\dots,k-2,k)$ is a critical path of length $k-1$ in $(\instance'',\allocation'')$.
        This contradicts the assumption that $k$ is the length of a shortest critical path across all profiles.
    \paragraph{Case 2: $k =2$.}
        In this case, we first use minimal completeness and IDU to identify an instance such that agent~$1$ gets all her valued goods, agent~$2$ gets all his valued goods except one good which is assigned to agent~$1$, and agent~$1$ gets at least two more goods than agent~$2$.
        We then show by induction that as agent~$2$ approves additional goods in agent~$1$'s bundle, the size of agent~$2$'s bundle does not change; the induction step involves applying neutrality.
        When agent~$2$ approves $|A_2|+2$ goods in agent~$1$'s bundle, she receives only $|A_2|$ goods out of her $2|A_2|+2$ approved goods, which violates EF1.
    
        Formally, from (\ref{eqn:A1>Ak+1}), we have
        \begin{equation} \label{eqn:case2_A1A2}
            |A_1| \geq |A_2| + 2.
        \end{equation}
        By minimal completeness and IDU, we may assume that
        \begin{enumerate}[(i)]
            \item for all $i \in \{1,2\}$ and $j \in N \setminus \{1,2\}$, $P_i \cap A_j = \emptyset$ and $P_j \cap A_i = \emptyset$;
            \item $|P_2 \cap A_1| = 1$; and
            \item $P_1 = A_1$.
        \end{enumerate}
        The valuations and bundles of agents $\{1,2\}$ in the critical path of $(\instance, \allocation)$ are illustrated as follows.
        \begin{center}
        \begin{tabular}{ c | c c c c c c c c c}
        $\profile$ & \multicolumn{4}{c}{$\overbrace{\smash{}\hspace{3cm}\smash{}}^{A_1 = P_1}$} & \multicolumn{3}{c}{$\overbrace{\smash{}\hspace{2.1cm}\smash{}}^{A_2}$} \\
         \hline
         $1$ & \LL{$1$} & $\cdots$ & $1$ & \RR{$1$}\\
         \cline{2-8}
         $2$ & & & & $1$ & \LL{$1$} & $\cdots$ & \RR{$1$}\\  
         \cline{6-8}
         \multicolumn{1}{c}{} & & & & \multicolumn{1}{c}{$\underbrace{\smash{}\hspace{0.2cm}\smash{}}_{P_2 \cap A_1}$}
        \end{tabular}
        \end{center}
        
        For $\alpha\in \{0,\dots,|A_2| + 1\}$, let $S_\alpha$ be some set of $\alpha$ goods in $A_1 \setminus P_2$; such a set exists by (ii) and (\ref{eqn:case2_A1A2}).       
        Consider any instance $\instance^\alpha = (N, G, \profile^\alpha)$ with profile $\profile^\alpha = (P^\alpha_1,P^\alpha_2)$ such that $P_1^\alpha = P_1$ and $P_2^\alpha = P_2 \cup S_\alpha$. 
        Let $\allocation^\alpha = \therule(\instance^\alpha)$ be the corresponding allocation returned by $\therule$ on the instance $\instance^\alpha$.
        The valuations of both agents under $\instance^\alpha$ are illustrated as follows. We also include the allocation $\allocation$ and the set $S_\alpha$; note that $\allocation$ does \emph{not} denote the allocation returned by $\therule$ in this instance.
        \begin{center}
        \begin{tabular}{ c | c c c c c c c c c c c c}
        $\profile^\alpha$
         & \multicolumn{7}{c}{$\overbrace{\smash{}\hspace{5.3cm}\smash{}}^{A_1}$} & \multicolumn{3}{c}{$\overbrace{\smash{}\hspace{1.8cm}\smash{}}^{A_2}$} \\
         \hline
         $1$ & \LL{$1$} & $\cdots$ & $1$ & $1$ & $\cdots$ & $1$ & \RR{$1$}  \\
         \cline{2-8} \cline{9-11}
         $2$ & & &  & $\mathbf{1}$ & $\cdots$ & $\mathbf{1}$ & $1$ & \LL{$1$} & $\cdots$ & \RR{$1$}\\  
         \cline{9-11}
         \multicolumn{4}{c}{} & \multicolumn{3}{c}{$\underbrace{\smash{}\hspace{1.8cm}\smash{}}_{S_\alpha}$} & \multicolumn{1}{c}{$\underbrace{\smash{}\hspace{0.1cm}\smash{}}_{P_2 \cap A_1}$}
        \end{tabular}
        \end{center}

        Since agent~$2$ is the only agent who values goods in $A_2$, by PO, we have $A_2 \subseteq A^\alpha_2$.
        We will prove by induction that $|A^\alpha_2| =  |A_2|$ for all $\alpha\in[|A_2|+1]$.
        The base case $\alpha = 0$ holds trivially. 
        We establish the induction step via the following lemma.

        \begin{lemma}
        \label{lem:induction}
            Suppose that for any instance $\instance^\kappa = (N, G, \profile^\kappa)$ with $S_\kappa$ for some $\kappa\in[|A_2|]$, we have $|A^\kappa_2| = |A_2|$. Then, for any instance $\instance^{\kappa+1} = (N, G, \profile^{\kappa+1})$ with $S_{\kappa+1}$, we also have $|A^{\kappa+1}_2| = |A_2|$.
        \end{lemma}
        
        \begin{innerproof}[Proof of \Cref{lem:induction}]
        Assume that $|A^\kappa_2| = |A_2|$ for any instance $\instance^\kappa$ with $S_\kappa$.
        Consider the instance $\instance^{\kappa+1} = (N, G, \profile^{\kappa+1})$ such that $P^{\kappa+1}_1 = P^\kappa_1$ and $P^{\kappa+1}_2 = P^\kappa_2 \cup \{g\}$ for some $g \in A^\kappa_1 \setminus P^\kappa_2$. 
        Such a good $g$ exists since $|A^\kappa_1| \ge |A_2|+2$ by the induction hypothesis and $A^\kappa_1 \subseteq P_1$, while $|P^\kappa_2\cap P_1|\le |A_2|+1$.

        Let $\allocation^{\kappa+1}= \therule(\instance^{\kappa+1})$ be the corresponding allocation returned by $\therule$ on the instance $\instance^{\kappa+1}$.
        If $g^* \in A^{\kappa+1}_2$ for some good $g^* \in P^\kappa_2 \cap A^\kappa_1$, then agent~$2$ can misreport her preferences from $\profile^\kappa$ to $\profile^{\kappa+1}$ so as to obtain a strictly higher value, contradicting the strategyproofness of $\therule$.
        Thus, $g^* \notin A^{\kappa+1}_2$ for all $g^* \in P^\kappa_2 \cap A^\kappa_1$.

        If $g \notin A^{\kappa+1}_2$, then by the previous paragraph, the induction hypothesis, and PO, we have $A^{\kappa+1}_1 = A^\kappa_1$.
        This implies that $|A^{\kappa+1}_2| = |A^\kappa_2| = |A_2|$, where the latter equality follows from the induction hypothesis, as desired.
        Suppose now that $g \in A^{\kappa+1}_2$.
        Consider another instance $\widehat{\instance} = (N, G, \widehat{\profile})$ with $\widehat{\profile}$ such that $\widehat{P}_1 = P^{\kappa+1}_1 = P^\kappa_1$ and $\widehat{P}_2 = P^{\kappa+1}_2 \setminus \{g^*\}$ for some $g^* \in P^\kappa_2 \cap A^\kappa_1$.
        Let $\widehat{\allocation} = \therule(\widehat{\instance})$ be the corresponding allocation returned by $\therule$ on the instance $\widehat{\instance}$.
        By IDU, $\widehat{\allocation} = \allocation^{\kappa+1}$.
        In particular, we have 
        \begin{equation} \label{eqn:k=2_2}
            |\widehat{A}_2| = |A^{\kappa+1}_2|.
        \end{equation}
        Note that $\widehat{\profile} = \pi(\profile^\kappa)$ for some permutation $\pi$ of $G$.
        By neutrality, $|\widehat{A}_2| = |A^\kappa_2| = |A_2|$, where the latter equality follows from the induction hypothesis.
        Combining this with~(\ref{eqn:k=2_2}), we get $|A^{\kappa+1}_2| = |\widehat{A}_2| = |A_2|$, as desired.
        \end{innerproof}
        Thus, by induction, we showed that for any instance $\instance^\alpha$ where $\alpha = |A_2|+1$, we have $|A^\alpha_2| = |A_2|$.
        However, observe that $|P^\alpha_2| = 2(|A_2|+1)$ while $|A^\alpha_2| = |A_2|$, so EF1 is violated.
        This yields the desired contradiction.
    \end{proof}
To complement \Cref{thm:main-characterization}, we demonstrate that all five properties are necessary for the characterization---in other words, if we drop any of them, there exists an allocation rule that satisfies the remaining four properties but does not maximize Nash welfare.
The proof of this statement can be found in Appendix~\ref{app:independence-main}.
\begin{proposition}\label{prop:non-uniqueness_4of5}
    Under binary valuations, given only four of the properties in \{EF1, strategyproofness, neutrality, minimal completeness, IDU\}, there exists an allocation rule for $n = 2$ that satisfies the four properties but does not maximize Nash welfare.
\end{proposition}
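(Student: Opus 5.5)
The statement concerns the independence of the five axioms in \Cref{thm:main-characterization}. For each axiom I plan to exhibit a concrete two-agent rule that satisfies the other four axioms and, on some explicit instance, returns an allocation that is not MNW. Each construction has two parts. The first is a check of the four retained axioms, using \Cref{lem:PO} and \Cref{lem:mnw_conditions} where convenient. The second is a small instance in which the output has a critical path, or fails to be MNW for some other reason. I will fix a lexicographic order over goods throughout, to break any residual ties, so that each rule is a genuine function.

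\emph{Dropping minimal completeness.} The rule that always returns the empty allocation is EF1 (indeed envy-free), trivially strategyproof and neutral, and IDU because the output never changes. It is not MNW as soon as any good is valued.

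\emph{Dropping EF1.} I would use a serial dictatorship. Agent~$1$ receives $P_1$, and agent~$2$ receives $P_2\setminus P_1$. This rule is minimally complete and neutral. It is strategyproof: agent~$1$ already receives everything she approves, and agent~$2$ receives all of her approved goods that agent~$1$ does not take, a set she cannot affect. It is IDU: every good that agent~$1$ does not receive is unapproved by her, so she cannot disapprove any of them; and agent~$2$ disapproving a good held by agent~$1$ changes nothing. If both agents approve the same two goods, the output is not MNW.

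\emph{Dropping strategyproofness.} Among the minimally complete, non-redundant, EF1 allocations, I would select one that maximizes agent~$1$'s value. This set is nonempty because it contains the MNW allocation. Neutrality holds because the selection depends only on values. To separate the rule from MNW, let agent~$1$ approve $g_1,\dots,g_6$ and agent~$2$ approve $g_1,\dots,g_4$. EF1 for agent~$2$ forces her to receive at least two of the four shared goods. The rule therefore gives her exactly two, which yields values $(4,2)$, whereas MNW yields $(3,3)$. The delicate point is IDU. When agent~$i$ disapproves a good $g$ that she does not hold, I will argue two things: the old output remains feasible, and no new feasible allocation gives agent~$1$ strictly more. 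For $i=1$ this uses that agent~$1$ never wants $g$ once she disapproves it. For $i=2$ it uses that EF1 constraints for agent~$2$ only relax in a way that is irrelevant, because $g$ already lies in $A_1$. If this argument fails, I would instead add an explicit tie-breaking rule.

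\emph{Dropping neutrality or IDU.} These two cases are where I expect the main obstacle, because strategyproofness together with EF1 is fragile.

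For neutrality, I would take MNW$^{\text{tie}}$ and modify it in a label-dependent way: whenever $g_1$ is approved by both agents and MNW leaves a tie in the two agents' values, agent~$2$ receives the extra good instead of agent~$1$. This rule remains MNW, so it does not yet separate. I therefore expect to need a more substantial label-dependent deviation. My first try is this: in instances where $g_1$ is approved by both agents, allocate the contested goods so that agent~$1$ receives one more of them than MNW$^{\text{tie}}$ gives her, but only while EF1 is preserved. I would then verify strategyproofness by case analysis on whether the manipulating agent can toggle her approval of $g_1$.

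For IDU, I would modify MNW$^{\text{tie}}$ only on instances with $P_2=\emptyset$ and $|P_1|\ge2$: in those instances, give one good of $P_1$ to agent~$2$. This keeps EF1 and neutrality and does not preserve minimal completeness, so it is a placeholder only; I would seek a replacement in which the deviation is a different choice among PO and EF1 allocations that depends on approvals of goods the agent does not receive. Strategyproofness will then be checked against the fact that agent~$2$'s truthful MNW share is at least as large as anything she could obtain by misreporting.
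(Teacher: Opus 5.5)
Your EF1 case (the serial dictatorship) is the paper's. Your minimal-completeness case (always return the empty allocation) is correct and simpler than the paper's. The strategyproofness case, however, fails: your rule violates IDU, and your reason for expecting IDU to hold is backwards. Take $P_1=P_2=\{g_1,g_2\}$. Giving both goods to agent~1 violates EF1 for agent~2, so the rule splits them, say $(\{g_1\},\{g_2\})$. Now let agent~2 disapprove $g_1\notin A_2$. In the new instance, $(\{g_1,g_2\},\emptyset)$ is minimally complete, non-redundant and EF1: after removing $g_2$, agent~2 values agent~1's bundle at $0$. So the rule switches to it; the argument is symmetric if the original split was $(\{g_2\},\{g_1\})$. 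Agent~1's value changes from $1$ to $2$, so no tie-breaking can repair this. The structural reason is that a good in $A_1\cap P_2$ counts in $v_2(A_1)$. That is exactly the quantity in agent~2's binding EF1 constraint, so disapproving such a good creates slack, and a rule that pushes agent~1 to the EF1 boundary immediately exploits it.

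For neutrality and IDU you have no construction, and both sketches fail. In your neutrality attempt, let each agent uniquely approve five goods and both approve $g_1,\dots,g_4$. MNW$^{\text{tie}}$ yields values $(7,7)$, and your deviation yields $(8,6)$, which is EF1. Whichever agent holds $g_1$, the other can disapprove it. This removes the trigger, MNW$^{\text{tie}}$ on the new instance again yields $(7,7)$, and IDU fails. Your IDU placeholder is actually minimally complete: every valued good is still allocated, and what fails is non-redundancy, which is not among the five axioms. But it is not strategyproof once $m\ge 3$. Let $c$ be the good handed to agent~2 when $P_2=\emptyset$ and agent~1 reports all of $G$. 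Then an agent~1 whose true approval set has at least two goods and avoids $c$ gains by reporting $G$.

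The missing idea is to keep each deviation from MNW$^{\text{tie}}$ tiny and local, so that the four retained axioms need checking only on a few instances and their one-step neighbors. The paper does exactly this:
\begin{itemize}
\item for IDU, it favors agent~2 but gives the good to agent~2 when $m=1$ and only agent~1 approves it;
\item for neutrality, it deviates only on two $m=2$ profiles: when agent~1 approves both goods and agent~2 approves only $g_2$, both goods go to agent~1; when both agents approve both goods, $g_2$ goes to agent~1;
\item for strategyproofness, it favors agent~2 but returns values $(4,1)$ on the $m=5$ instances with all goods valued, $|P_1|\ge 4$ and $|P_2|=2$.
\end{itemize}
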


Now, consider any instance with $n = 2$ agents and $m$ goods. 
Assume that $m = 2r+1$ is odd and, for each agent, it is possible to obtain an MNW allocation that gives $r$ goods to that agent and $r+1$ goods to the other agent.
A natural question is whether an allocation rule satisfying the five properties in \Cref{thm:main-characterization} needs consistent tie-breaking, i.e., must consistently break ties in favor of a particular agent across all such instances.
For example, consider the following two instances:
        \begin{center}
        \begin{tabular}{ c | c c c c c }
            $\profile_1$ & $g_1$ & $g_2$ & $g_3$ & $g_4$ & $g_5$\\ 
             \hline
             $1$ & \circled{$1$} & \circled{$1$} & \circled{$1$} & $1$ & $1$ \\  
             $2$ & $1$ & $1$ & $1$ & \circled{$1$} & \circled{$1$} \\
        \end{tabular}
            \quad
            and
            \quad
        \begin{tabular}{ c | c c c c c }
            $\profile_2$ & $g_1$ & $g_2$ & $g_3$ & $g_4$ & $g_5$\\ 
             \hline
             $1$ & \circled{$1$} & \circled{ $1$} & $1$ & $1$ & $1$ \\  
             $2$ & $0$ & $1$ & \circled{$1$} & \circled{$1$} & \circled{$1$} \\
        \end{tabular}
        \end{center}
In both instances, an MNW allocation can give either three goods to agent~$1$ and two goods to agent~$2$, or two goods to agent~$1$ and three goods to agent~$2$.
Thus, a rule that always maximizes Nash welfare may return the corresponding circled allocation in each of the two instances.
However, we will show that this cannot be the case for a rule that satisfies the five properties in \Cref{thm:main-characterization}.
Specifically, we show that among all instances with the same number of total goods and the same number of valued goods, any such rule must break ties consistently.

Let $N = \{1,2\}$. 
For an allocation rule $\therule$, we say that an instance $\instance = (N, G, \profile)$ is \emph{tiebreak-relevant} if for $\therule(\instance) = \allocation$ where $\mathcal{A} \in$ \mnw($\instance$), either
\begin{enumerate}[(i)]
    \item $|A_1| = r+1$, $|A_2| = r$, and $A_1 \cap P_2 \neq \emptyset$, or
    \item  $|A_1| = r$, $|A_2| = r+1$, and $A_2 \cap P_1 \neq \emptyset$,
\end{enumerate}
for some $r \in \mathbb{Z}_{\geq 0}$.
If (i) holds, we say that $\therule$ favors agent~$1$, while if (ii) holds, we say that $\therule$ favors agent~$2$.
Our formal result is stated below.

\begin{theorem} \label{thm:tiebreaking_1stchar}
    Let $n = 2$, and let $\mathcal F$ be a rule that is minimally complete, neutral, IDU, and maximizes Nash welfare.
    Then, for each $m$ and each $m_v\in\{1,\dots,m\}$, there exists an agent~$i \in \{1,2\}$ such that for every tiebreak-relevant instance with $m$ goods, of which $m_v$ are valued, $\therule$ chooses an MNW allocation favoring agent~$i$.
\end{theorem}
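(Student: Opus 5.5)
The plan is to reduce every tiebreak-relevant instance, using IDU, to one canonical symmetric instance, and then let neutrality pin down the outcome there. Fix $m$ and $m_v$. By \Cref{lem:PO}, $\therule$ is PO. Hence in every instance, with $\allocation = \therule(\instance)$, each valued good goes to an agent who approves it, and therefore $v_i(A_i) = |A_i|$. Minimal completeness gives $|A_1|+|A_2| = m_v$. So a tiebreak-relevant instance can only exist if $m_v = 2r+1$ is odd, and then $r = (m_v-1)/2$ is determined by $m_v$. If $m_v$ is even, the statement is vacuous.

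Define the canonical instance $D$ with $m$ goods as follows. There is one good $g^*$ approved by both agents. Agent~$1$ additionally approves $r$ goods that only she approves. Agent~$2$ additionally approves $r$ other goods that only he approves. The remaining $m - m_v$ goods are unvalued. Any two instances of this form differ only by a permutation of the goods. By neutrality, the vector of values $(v_1(\therule(D)_1), v_2(\therule(D)_2))$ is therefore the same for all of them. By PO and minimal completeness, this vector is either $(r+1,r)$ or $(r,r+1)$, according to who receives $g^*$. Let $i$ be the agent who receives $r+1$ in $D$. We will show that every tiebreak-relevant instance favors $i$.

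Take a tiebreak-relevant instance $\instance$ and suppose, say, that it is of type (i): $|A_1| = r+1$, $|A_2| = r$, and there is some $g \in A_1 \cap P_2$. We modify the profile one good at a time, each time using IDU to keep the allocation fixed.
\begin{itemize}
\item Agent~$1$ disapproves every good she approves that lies outside $A_1$. These goods lie in $A_2$ or are unallocated; they cannot be unvalued goods, which she does not approve in the first place.
\item Agent~$2$ disapproves every good in $A_1 \setminus \{g\}$ that he approves.
\end{itemize}
Each step removes approval of a good not assigned to that agent, so by IDU the allocation stays equal to $\allocation$ throughout. Since $\allocation$ is minimally complete and non-redundant, the final profile has $P_1 = A_1$ and $P_2 = A_2 \cup \{g\}$. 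The set of valued goods is unchanged, so there are still $m_v$ of them and $m - m_v$ unvalued goods. This final instance is exactly a relabeling of $D$, with $g$ playing the role of $g^*$. Its outcome gives agent~$1$ value $r+1$, so by the neutrality observation above, $i = 1$. The symmetric argument for type (ii) yields a relabeling of $D$ in which agent~$2$ gets $r+1$, so $i = 2$. Since $i$ depends only on $D$, which depends only on $m$ and $m_v$, all tiebreak-relevant instances with these parameters favor the same agent.

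The main point needing care is the reduction step. IDU must be applied legitimately at every stage: the disapproved good must not be in that agent's bundle in the current outcome, which holds because the outcome never changes. We also need the reduced instance to keep exactly $m$ goods and $m_v$ valued goods, so that it is literally a permutation of $D$ on the same good set. This holds because $g$ and all goods of $A_1 \cup A_2$ remain approved by their recipients.
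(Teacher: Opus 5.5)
Your proof is correct and follows essentially the same route as the paper. You use IDU, with non-redundancy from \Cref{lem:PO}, to strip a tiebreak-relevant instance down to the profile $P_1=A_1$, $P_2=A_2\cup\{g\}$ (or its mirror image), and then apply neutrality across relabelings. The only cosmetic difference is that you argue directly against a fixed reference instance $D$, whereas the paper reduces two oppositely-favoring instances this way and derives a contradiction between them.
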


\begin{proof}
    Consider any tiebreak-relevant instance $\mathcal{I} = (N,G,\profile)$, and  assume without loss of generality that $\therule(\instance) = \allocation$ is an MNW allocation favoring agent~$1$.
    Suppose for contradiction that there exists some other tiebreak-relevant instance $\instance' = (N,G,\profile')$ with the same number of goods $m$ and valued goods $m_v \in \{1,\dots, m\}$ such that $\therule(\instance')=\allocation'$ is an MNW allocation favoring agent~$2$.
    In both instances, there is the same number of unvalued goods, which are discarded by the rule (by minimal completeness). 
    Let the number of valued goods be $m_v = 2r+1$.
    Then, we have that
    \begin{equation*}
        |A_1|=r+1, \quad |A_2|=r, \quad |A'_1|=r, \quad |A'_2|=r+1.
    \end{equation*}

    Consider the instance $\widehat{\instance} = (N,G,\widehat{\profile})$ with profile $\widehat{\profile} = (\widehat{P}_1,\widehat{P}_2)$ such that
    \begin{itemize}
        \item $\widehat{P}_1 = A_1$, and
        \item $\widehat{P}_2 = A_2 \cup \{ \widehat{g} \}$ for some $\widehat{g} \in A_1 \cap P_2$.
    \end{itemize}
    The valuations of agents $N=\{1,2\}$ under this instance are illustrated as follows. 
    Empty cells indicate that the agent has a value of $0$ for the good. Unvalued goods are not allocated and therefore omitted.
    \begin{center}
        \begin{tabular}{ c | c c c c c c c }
    \multirow{2}{*}{ $\widehat{\profile}$
        }
         & \multicolumn{4}{c}{$\overbrace{\smash{}\hspace{2.4cm}\smash{}}^{A_1}$} & \multicolumn{3}{c}{$\overbrace{\smash{}\hspace{1.8cm}\smash{}}^{A_2}$}\\
         & & & & \rotatebox{0}{$\widehat{g}$} & & &\\ 
         \hline
         $1$ &  $1$ & $\cdots$ & $1$ & $1$ &  & & \\
         $2$ & & & & $1$ & $1$ & $\cdots$ & $1$ \\  
        \end{tabular}
    \end{center}
    Let $\widehat{\allocation} = \therule(\widehat{\profile})$ be the allocation returned by $\therule$ on profile $\widehat{\profile}$.
    Since $\therule$ is IDU, we have that $\widehat{\allocation} = \allocation$.
    In particular,
    \begin{equation} \label{tiebreak_contradict1}
        |\widehat{A}_1| = r+1 \quad \text{and} \quad |\widehat{A}_2| = r.
    \end{equation}
    
    Similarly, consider the instance $\widehat{\instance}' = (N,G,\widehat{\profile}')$ with profile $\widehat{\profile}' = (\widehat{P}'_1,\dots,\widehat{P}'_n)$ such that
    \begin{itemize}
        \item $\widehat{P}'_1 = A'_1 \cup \{\widehat{g}'\} \text{ for some } \widehat{g}' \in A'_2 \cap P'_1$, and
        \item $\widehat{P}'_2 = A'_2$.
    \end{itemize}
    The valuations of agents $N=\{1,2\}$ under this instance are illustrated as follows. 
    \begin{center}
        \begin{tabular}{ c | c c c c c c c}
    \multirow{2}{*}{ $\widehat{\profile}'$
        }
         & \multicolumn{3}{c}{$\overbrace{\smash{}\hspace{1.8cm}\smash{}}^{A'_1}$} & \multicolumn{4}{c}{$\overbrace{\smash{}\hspace{2.4cm}\smash{}}^{A'_2}$}\\
         & & & &  \rotatebox{0}{$\widehat{g}'$} & & &\\ 
         \hline
         $1$ & $1$ & $\cdots$ & $1$ & $1$  &  &  &  \\
         $2$ & & & & $1$ & $1$ & $\cdots$ & $1$ \\  
        \end{tabular}
    \end{center}
    Let $\widehat{\allocation}' = \therule(\widehat{\profile}')$ be the allocation returned by $\therule$ on profile $\widehat{\profile}'$.
    Since $\therule$ is IDU, we have that $\widehat{\allocation}' = \allocation'$.
    In particular,
    \begin{equation} \label{tiebreak_contradict2}
        |\widehat{A}'_1| = r \quad \text{and} \quad |\widehat{A}'_2| = r+1.
    \end{equation}
   However, since the number of unvalued goods is the same in both instances, neutrality implies that $|\widehat{A}'_k| = |\widehat{A}_k|$ for all $k \in \{1,2\}$, a contradiction to (\ref{tiebreak_contradict1}) and (\ref{tiebreak_contradict2}).
\end{proof}

The tie-breaking specification in \Cref{thm:tiebreaking_1stchar} is already the best possible that we can infer from the assumptions of the theorem.
To see this, consider a version of MNW that, when given $m$ total goods and $m_v$ valued goods, discards the unvalued goods and, if necessary, break ties\footnote{Note that tie-breaking between agents is never necessary when $m_v$ is even.} in favor of agent $i_{m,m_v}\in\{1,2\}$.
Any remaining ties are broken lexicographically over the goods.\footnote{See the end of \Cref{sec:prelims} for further explanation of this step.}
Regardless of how the indices $i_{m,m_v}$ are chosen across all pairs $(m,m_v)$, the resulting rule satisfies all the assumptions of \Cref{thm:tiebreaking_1stchar}.
Indeed, minimal completeness and the fact that the rule maximizes Nash welfare are obvious.
For neutrality, permuting the labels of the goods does not change $m$ and $m_v$, so the tie-breaking between agents (if any) remains consistent.
Finally, for IDU, if an agent ceases to approve a good that is not allocated to her, then in the original instance, this good must be allocated to the other agent, who also values it.
Therefore, the original instance and the new instance share the same $m$ and $m_v$, and any tie-breaking again remains consistent.

\section{Alternative Characterization for Two Agents}
\label{sec:alternative-characterization}

In this section, we provide an alternative characterization for the setting of two agents, using non-redundancy along with another property called resource-monotonicity in place of IDU. 
Resource-monotonicity states that when an extra good is added, no agent's value should decrease.
\begin{definition}[Resource-monotonicity]
    An allocation rule $\mathcal{F}$ is \emph{resource-monotone} if the following holds: For any two instances $\mathcal{I}$ and $\mathcal{I}'$ such that $\mathcal{I}'$ can be obtained from $\mathcal{I}$ by adding one extra good, if $\mathcal{F}(\mathcal{I}) = \mathcal{A}$ and $\mathcal{F}(\mathcal{I}') = \mathcal{A}'$, then for each $i \in N$, $v_i(A_i) \leq v_i(A'_i)$.
\end{definition}
\citet{SuksompongTe22} showed that the \mnwtie{} rule satisfies resource-monotonicity.

For $\{i,j\} = \{1,2\}$ and a non-redundant bundle $A_i$, let $A_i = A_i^u \cup A_i^b$ be the partition of $A_i$ into the set of goods that agent~$i$ \emph{uniquely} values (i.e., $A_i^u = A_i \setminus P_j$) and the set of goods that \emph{both} agents value (i.e., $A_i^b = A_i \cap P_j)$, respectively.
Our characterization result is as follows.

\begin{theorem} \label{thm:n=2}
    For $n=2$, under binary valuations, any allocation rule $\therule$ that is \emph{EF1}, non-redundant, strategyproof, resource-monotone, neutral, and minimally complete maximizes Nash welfare.
\end{theorem}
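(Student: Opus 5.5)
Non-redundancy and minimal completeness together imply PO, so $\therule$ returns PO allocations with $v_i(A_i)=|A_i|$. For $n=2$, by \Cref{lem:mnw_conditions}, a non-MNW output means a critical path, and the only paths have length two. So it suffices to rule out an instance $\instance$ with $\allocation=\therule(\instance)$, say $|A_1|\ge |A_2|+2$ and $A_1^b\neq\emptyset$ (agent~$2$ values some good in $A_1$). The plan mirrors Case~2 of \Cref{thm:main-characterization}. Without IDU we cannot freely shrink approval sets, so we will instead move between instances by adding goods, using resource-monotonicity, and relabel goods, using neutrality.

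\textbf{Step 1 (normal form).} First reduce to a canonical instance determined only by four numbers: $a=|A_1^u|$, $b=|A_1^b|$, $c=|A_2^u|$, $d=|A_2^b|$, plus the number of unvalued goods. By neutrality, any two instances with the same multiset of approval types (good approved by $1$ only, by $2$ only, by both, by none) give the same value vector. Hence the output values depend only on the counts $(x,y,z)$ of goods of types $\{1\}$, $\{2\}$, $\{1,2\}$ (unvalued goods are discarded). By PO, the only freedom is how the $z$ shared goods are split. Write $t(x,y,z)$ for the number of shared goods agent~$1$ receives, so agent~$1$ gets $x+t$ and agent~$2$ gets $y+z-t$. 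A critical path means $x+t\ge y+z-t+2$ with $t\ge1$. Define $f(x,y,z)=(x+t,\,y+z-t)$.

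\textbf{Step 2 (constraints on $t$).} Collect the consequences of the axioms on $t$:
\begin{itemize}
\item EF1 gives $|(x+t)-(y+z-t)|$-type bounds in each agent's view. For agent~$2$: $y+z-t\ge t-1$, so $t\le (y+z+1)/2$. For agent~$1$: $x+t\ge z-t-1$.
\item Resource-monotonicity, applied to adding a good of any type, makes both coordinates of $f$ weakly increasing in each of $x,y,z$.
\item Strategyproofness for agent~$2$: approving one extra good of type $\{1\}$ moves the profile from $(x,y,z)$ to $(x-1,y,z+1)$. So agent~$2$'s true value at the new profile, namely the number of her original approved goods she receives, cannot exceed $y+z-t(x,y,z)$. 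Similarly, dropping approval of a shared good, from $(x,y,z)$ to $(x+1,y,z-1)$, is constrained.
\end{itemize}
Combined with neutrality (as in \Cref{lem:induction}, where a permutation turns one profile into another), these give relations such as $y+(z+1)-t(x-1,y,z+1)\le y+z-t(x,y,z)+1$, i.e., the $\{2\}$-agent's count changes by at most one per added approval.

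\textbf{Step 3 (pumping to contradict EF1).} Start from a critical instance with counts $(a,c,b+d)$. Mimic the induction of Case~2: let agent~$2$ successively approve more goods of type $\{1\}$, turning them into shared goods. The aim is to show her bundle size stays at $c+d$, which eventually violates EF1 once she approves about $2(c+d)+2$ goods. The induction step requires that agent~$2$ is not given the newly approved good. The previous proof used IDU to revert to an isomorphic profile; here we use resource-monotonicity instead. Remove a good (a decrease in resources can only decrease values), then add back a relabeled good, and compare via neutrality with the earlier instance. Strategyproofness gives the upper bound on agent~$2$'s true value. Resource-monotonicity and neutrality give the matching lower bound for agent~$1$, pinning agent~$1$'s count at $\ge a+b$. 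This is what keeps agent~$2$'s count from rising.

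\textbf{Main obstacle.} The hard step is this replacement of IDU in the induction step. Resource-monotonicity only compares instances differing by an added good, not by a changed approval. I will handle it by a two-step detour: add an extra good of the appropriate type, then use neutrality to identify the enlarged instance with one reached from the other side. Minimal completeness lets an unvalued added good be a harmless dummy that keeps $m$ aligned. If this detour fails to give equality, I would instead try a contradiction by a minimal counterexample: choose the critical instance minimizing the total number of valued goods. Then removing any good yields an MNW output, and resource-monotonicity against that smaller MNW allocation bounds agent~$2$'s value below, contradicting criticality. I expect the real argument to blend the two approaches.
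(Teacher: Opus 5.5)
\textbf{There is a genuine gap at the step you flag as the main obstacle.} Your normalization to counts and the plan to pump agent~$2$'s approvals are fine. The proof stands or falls on showing that agent~$2$'s bundle size does not increase when she newly approves one of agent~$1$'s uniquely valued goods. Your strategyproofness inequality $y+(z+1)-t(x-1,y,z+1)\le y+z-t(x,y,z)+1$ only caps the increase at one, and neither remedy you propose closes the gap.

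Resource-monotonicity plus neutrality cannot do it. Comparing $(x,y,z)$ and $(x-1,y,z+1)$ through the common sub-instance $(x-1,y,z)$ is fully consistent with agent~$2$'s count rising by one. Take the appendix rule that violates only strategyproofness: \mnwtie{} favoring agent~$1$, except that when there are two valued goods, agent~$1$ values both and agent~$2$ values one, both go to agent~$1$. This rule is EF1, non-redundant, minimally complete, neutral and resource-monotone. Yet agent~$2$'s count goes from $0$ at counts $(1,0,1)$ to $1$ at $(0,0,2)$, which respects your inequality.

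The same rule defeats your fallback. The instance $(1,0,1)$ is critical, both of its one-good sub-instances are MNW, and resource-monotonicity against them gives no contradiction. So strategyproofness has to be used more strongly than in your sketch.

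\textbf{The missing idea is to choose the ``fake'' approval \emph{after} seeing the allocation at the enlarged profile.} Neutrality makes this legitimate, because every labeling with the same counts gives the same values. Induct on $j$ with the hypothesis that every instance with counts $(x-j,y,z+j)$ gives agent~$2$ at most $B=y+z-t$ goods. Let $\instance'$ have counts $(x-j-1,y,z+j+1)$ and $\mathcal{B}'=\therule(\instance')$.
\begin{itemize}
\item Suppose agent~$1$ receives some shared good $g$ in $\mathcal{B}'$. Let $\instance''$ be $\instance'$ with agent~$2$ disapproving $g$. Then $\instance''$ has counts $(x-j,y,z+j)$, so agent~$2$ gets at most $B$ there. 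Misreporting to $\instance'$ gives her true value $|B'_2|$, since $g\notin B'_2$. Strategyproofness therefore yields $|B'_2|\le B$.
\item Suppose agent~$1$ receives no shared good in $\mathcal{B}'$. The same misreport then gives agent~$2$ true value $y+z+j>B$, which violates strategyproofness outright.
\end{itemize}
Iterate until $x-j=0$. At that point agent~$1$ holds at least $x+t\ge B+2$ goods, all valued by agent~$2$, while agent~$2$ holds at most $B$. This contradicts EF1.

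The paper uses exactly this device in its Case~2: it fixes $S$ from $B_1^*$ and only then lets agent~$1$ disapprove the remaining goods. There it sits inside an induction on the number of goods, with a separate resource-monotonicity argument for the unbalanced case. With the device supplied, your pumping argument closes directly and does not need resource-monotonicity at all.
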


\begin{proof}
    Let $N = \{1,2\}$, and let $\therule$ be a rule satisfying the six axioms in the theorem statement.
    We show that it maximizes Nash welfare by induction on the number of goods $m = |G|$.
    By non-redundancy, for an allocation $\allocation$ returned by $\therule$ in any instance, we have $v_i(A_i) = |A_i|$ for all $i \in N$.
    
    For the base case $m=1$, if the good is unvalued, the rule leaves it unallocated (by minimal completeness).
    If the good is valued, the rule allocates it to some agent who values it (by minimal completeness and non-redundancy).
    Hence, the rule maximizes Nash welfare.

    We introduce some additional notation.
    For every instance $\instance = (N, G, \profile)$ with $n=2$ agents, we define the \emph{characteristic tuple} $\mathcal{C}$ of the corresponding profile $\profile$, which returns a set of goods and three numbers:
    \begin{equation*}
        \mathcal{C}(\profile) = (G, |P_1 \cap P_2|, |P_1 \setminus P_2|, |P_2 \setminus P_1|).
    \end{equation*}
    Then, for any other instance $\instance' = (N, G, \profile')$ with corresponding profile $\profile'$ that has the same characteristic tuple as $\profile$, we have that $\profile' = \pi(\profile)$ for some permutation $\pi$ of $G$.
    Consequently, by neutrality, if $\therule(\instance) = \allocation$ and $\therule(\instance') = \allocation'$, then $v_i(A_i) = v_i(A'_i)$ for each $i \in N$.

    We prove the induction step through the following lemma.

\begin{lemma}
\label{lem:MNW-induction}
    For any $\kappa\in \mathbb{N}$, if $\therule$ maximizes Nash welfare for every instance with $\kappa$ goods, then $\therule$ also maximizes Nash welfare for every instance with $\kappa+1$ goods.
\end{lemma}

\begin{innerproof}[Proof of \Cref{lem:MNW-induction}]
    Assume that $|G| = \kappa$ and consider any instance $\instance^{\kappa+1} = (N, G\cup\{g^*\}, \profile^{\kappa+1})$ on $\kappa + 1$ goods.
    Let $\instance^\kappa = (N, G, \profile^\kappa)$ be the instance with $\kappa$ goods such that $P^\kappa_i = P^{\kappa+1}_i \cap G$ for all $i\in N$.
    Let $\allocation = \therule(\instance^\kappa)$ be the allocation returned by $\therule$ on the instance $\instance^\kappa$.
    By assumption, $\mathcal{A} \in $ \mnw($\instance^\kappa$). 
    Let $\mathcal{B} = \therule(\instance^{\kappa+1})$ be the allocation returned by $\therule$ on the instance~$\instance^{\kappa+1}$.
    The characteristic tuple of $\profile^{\kappa+1}$ is
    \begin{equation*}
         \mathcal{C}(\profile^{\kappa+1}) = (G \cup \{g^*\}, |B^b_1| + |B^b_2|, |B^u_1|, |B^u_2|).
    \end{equation*}
    The valuations of both agents in $\instance^{\kappa+1}$ are illustrated as follows, together with the allocation $\mathcal{B}$.
    \begin{center}
        \begin{tabular}{ c | c c c c c c c c c c c c}
        $\profile^{\kappa+1}$
         & \multicolumn{3}{c}{$\overbrace{\smash{}\hspace{1.8cm}\smash{}}^{B_1^u}$} & \multicolumn{3}{c}{$\overbrace{\smash{}\hspace{1.8cm}\smash{}}^{B_1^b}$} & 
         \multicolumn{3}{c}{$\overbrace{\smash{}\hspace{1.8cm}\smash{}}^{B_2^b}$} & 
         \multicolumn{3}{c}{$\overbrace{\smash{}\hspace{1.8cm}\smash{}}^{B_2^u}$}\\
         \hline
         $1$ & \LL{$1$} & $\cdots$ & $1$& $1$ & $\cdots$ & \RR{$1$} & $1$ & $\cdots$ & $1$  \\
         \cline{2-13}
         $2$ & & & & $1$ & $\cdots$ & $1$ & \LL{$1$} & $\cdots$ & $1$ & $1$ & $\cdots$ & \RR{$1$} \\  
         \cline{8-13}
        \end{tabular}
    \end{center}
    
    If $g^* \notin P^{\kappa+1}_1 \cup P^{\kappa+1}_2$ (i.e., $g^*$ is unvalued), then by minimal completeness, $g^*$ must remain unallocated.
    By resource-monotonicity, $|B_1| = |A_1|$ and $|B_2| = |A_2|$, and $\therule$ continues to maximize Nash welfare for $\instance^{\kappa+1}$.

    Assume that $g^* \in P^{\kappa+1}_1 \cup P^{\kappa+1}_2$, and suppose for contradiction that $\therule$ does not maximize Nash welfare for the instance $\mathcal{I}^{\kappa+1}$, i.e., $\mathcal{B} \notin $ \mnw($\instance^{\kappa+1}$).
    We say that $\allocation = (A_1,A_2)$ is \emph{balanced} if $||A_1| - |A_2|| \leq 1$, and \emph{unbalanced} otherwise.    
    We split our analysis into two cases according to whether $\allocation$ is balanced. 

        \paragraph{Case 1: $\mathcal{A}$ is unbalanced.}
        This means that $||A_1|-|A_2|| \geq 2$.
        Without loss of generality, assume that
        \begin{equation}\label{eqn:A2A1geq2}
            |A_2| \geq |A_1| + 2.
        \end{equation}
        If $A^b_2 \neq \emptyset$, then $(2,1)$ is a critical path in $(\instance^\kappa, \allocation)$.
        Moreover, non-redundancy and minimal completeness imply PO, so $\allocation$ is PO.
        By Lemma \ref{lem:mnw_conditions}, $\allocation \notin$ \mnw($\instance^\kappa$), contradicting our assumption.
        We may thus assume that $A^b_2 = \emptyset$.
        This means that
        \begin{equation}\label{eqn:A20_nonempty}
            |A^u_2| = |A_2| \geq 2.
        \end{equation}

        If $B^b_2 = \emptyset$, then $\mathcal{B} \in $ \mnw($\instance^{\kappa+1}$), contradicting our assumption.
        Thus, we assume that $B^b_2 \neq \emptyset$. 
        By resource-monotonicity and minimal completeness, this means that $|B_2^b| = 1$, $B_2^u = A_2^u$, $|B_1| = |A_1|$, and $|B_2| = |A_2| + 1$.

        Consider another instance $\widehat{\mathcal{I}} = (N, G \cup \{g^*\} \setminus \{\widehat{g}\}, \widehat{\profile})$ derived from $\mathcal{I}^{\kappa+1}$ by removing some good $\widehat{g} \in B_2^u$. 
        Note that $\widehat{P}_1 = P^{\kappa+1}_1$ and $\widehat{P}_2 = P^{\kappa+1}_2 \setminus \{\widehat{g}\}$.     
        By resource-monotonicity, exactly one agent's bundle value should decrease (by $1$) when going from $\mathcal{I}^{\kappa+1}$ to $\widehat{\mathcal{I}}$.
        Let $\widehat{\mathcal{B}} = \therule(\widehat{\instance})$ be the allocation returned by $\therule$ on the instance $\widehat{\instance}$.
        The valuations of both agents in $\widehat{\instance}$ are illustrated as follows, together with the goods identified with respect to the allocation $\mathcal{B}$.
        Note that $\mathcal{B}$ does \emph{not} denote the allocation returned by $\therule$ in this instance.
        \begin{center}
        \begin{tabular}{ c | c c c c c c c c c c c c}
        $\widehat{\profile}$
         & \multicolumn{3}{c}{$\overbrace{\smash{}\hspace{1.8cm}\smash{}}^{B_1^u}$} & \multicolumn{3}{c}{$\overbrace{\smash{}\hspace{1.8cm}\smash{}}^{B_1^b}$} & 
         \multicolumn{3}{c}{$\overbrace{\smash{}\hspace{1.8cm}\smash{}}^{B_2^b}$} & 
         \multicolumn{3}{c}{$\overbrace{\smash{}\hspace{1.8cm}\smash{}}^{B_2^u \setminus \{\widehat{g}\}}$}\\
         \hline
         $1$ & $1$ & $\cdots$ & $1$& $1$ & $\cdots$ & $1$ & $1$ & $\cdots$ & $1$  \\
         $2$ & & & & $1$ & $\cdots$ & $1$ & $1$ & $\cdots$ & $1$ & $1$ & $\cdots$ & $1$ \\  
        \end{tabular}
    \end{center}
        
        Recall from (\ref{eqn:A2A1geq2}) and (\ref{eqn:A20_nonempty}) that
        \begin{equation*}
            |B_2^u| - 1 = |A_2^u| - 1 = |A_2| - 1 \geq |A_1| + 1.
        \end{equation*}      
        Since $\therule$ maximizes Nash welfare in all instances with $\kappa$ goods (in particular, $\widehat{\instance}$), we have that $\widehat{\mathcal{B}} \in $ \mnw($\widehat{\instance}$).

        Note that $|\widehat{P}_2| = |P_2^{\kappa+1}| - 1 \ge |B_2^u| - 1 \geq |A_1| + 1$.
        We now show that $|\widehat{P}_1| \geq |A_1|+1$ as well.
        Suppose for contradiction that $|\widehat{P}_1| \leq |A_1|$. This means that $|P^{\kappa+1}_1| = |\widehat{P}_1| \leq |A_1| = |B_1|$.
        Hence, agent~$1$ receives from $\mathcal{B}$ all the goods that she values in $\mathcal{I}^{\kappa+1}$.
        It follows that $\mathcal{B} \in$ \mnw$(\instance^{\kappa+1})$, contradicting our assumption.
        Hence, $|\widehat{P}_1| \geq |A_1|+1$.

        The total number of valued goods in instance $\widehat{\mathcal{I}}$ is
        \begin{equation*}
            |B_1| + |B_2| - 1
            = |A_1| + (|B_2^b| + |B_2^u|) -  1 
            \ge |A_1| + 1 + (|B_2^u| - 1)
            \geq 2 (|A_1| + 1).
        \end{equation*}
        Together with the fact that 
        \begin{equation*}
            |\widehat{P}_1| \geq |A_1| + 1 \quad \text{and} \quad |\widehat{P}_2| \geq |A_1|+1,
        \end{equation*}
        in order to ensure that $\widehat{\mathcal{B}}$ maximizes Nash welfare, we must have that $|\widehat{B}_1| \geq |A_1| + 1$.
        
        Finally, recall that instance $\instance^{\kappa+1}$ can be derived from $\widehat{\mathcal{I}}$ by adding the good $\widehat{g}$, which is only valued by agent~$2$.
        By resource-monotonicity, agent~$1$'s bundle value cannot decrease when moving from $\widehat{\mathcal{I}}$ to $\instance^{\kappa+1}$.
        This means that $|B_1| \geq |\widehat{B}_1| \geq |A_1| + 1$.
        However, this contradicts the fact that $|B_1| = |A_1|$.

    \paragraph{Case 2: $\mathcal{A}$ is balanced.}
        Assume that all goods are valued; the proof proceeds similarly if there are unvalued goods.
        If $\kappa$ is even, then $|A_1| = |A_2|$. Since $\therule$ is resource-monotone and one agent's value must increase by exactly $1$ when moving from $\instance^\kappa$ to $\instance^{\kappa+1}$, we have that $\mathcal{B} \in $ \mnw($\instance^{\kappa+1}$), contradicting our assumption.
        Thus, we assume that $\kappa$ is odd.

        Let $\kappa+1 = 2\ell$, for some $\ell \in \mathbb{N}$.
        Without loss of generality, assume that $|A_1| = \ell-1$ and $|A_2| = \ell$.
        If $\mathcal{B}$ is balanced, then  $\mathcal{B} \in$ \mnw($\instance^{\kappa+1}$), contradicting our assumption.
        Thus, it must be that $\mathcal{B}$ is unbalanced. By resource-monotonicity, this means that $|B_1| = |A_1| = \ell-1$ and $|B_2| = |A_2| + 1 = \ell+1$.
        Note also that
        \begin{equation} \label{eqn:B10B20_leqk}
            |B_2^u| \leq \ell;
        \end{equation}
        otherwise $\mathcal{B} \in $ \mnw($\instance^{\kappa+1}$), contradicting our assumption.
        
        Consider the instance $\mathcal{I}^* = (N, G \cup \{g^*\}, \profile^*)$ derived from $\instance^{\kappa+1}$ by letting all goods in $B_2^u$ be valued by agent~$1$---that is, $P^*_1 = P^{\kappa+1}_1 \cup B_2^u = B_1 \cup B_2$ and $P^*_2 = P^{\kappa+1}_2$.
        Let $\mathcal{B}^* = \therule(\mathcal{I}^*)$ be the corresponding allocation returned by $\therule$ on the instance $\mathcal{I}^*$.
        The characteristic tuple of $\profile^*$ is
        \begin{equation*}
            \mathcal{C}(\profile^*) = (G \cup \{g^*\}, |B^b_1| + |B_2|, |B_1^u|, 0).
        \end{equation*}
        The valuations of both agents in $\instance^*$ are illustrated as follows, together with the goods identified with respect to the allocation $\mathcal{B}$.
        Note that $\mathcal{B}$ does \emph{not} denote the allocation returned by $\therule$ in this instance.

        \begin{center}
        \begin{tabular}{ c | c c c c c c c c c c c c}
        $\profile^*$
         & \multicolumn{3}{c}{$\overbrace{\smash{}\hspace{1.8cm}\smash{}}^{B_1^u}$} & \multicolumn{3}{c}{$\overbrace{\smash{}\hspace{1.8cm}\smash{}}^{B_1^b}$} & 
         \multicolumn{3}{c}{$\overbrace{\smash{}\hspace{1.8cm}\smash{}}^{B_2^b}$} & 
         \multicolumn{3}{c}{$\overbrace{\smash{}\hspace{1.8cm}\smash{}}^{B_2^u}$}\\
         \hline
         $1$ & $1$ & $\cdots$ & $1$& $1$ & $\cdots$ & $1$ & $1$ & $\cdots$ & $1$ & $\mathbf{1}$ & $\cdots$ & $\mathbf{1}$  \\
         $2$ & & & & $1$ & $\cdots$ & $1$ & $1$ & $\cdots$ & $1$ & $1$ & $\cdots$ & $1$ \\  
        \end{tabular}
    \end{center}
        
        We have $\ell > |B_1| \ge |B_1^u|$.
        Also, by EF1, it holds that $|B_1^*| \ge \ell$.
        Let $S\subseteq G$ be a subset of $\ell$ goods that contains
        \begin{enumerate}[(i)]
            \item all $|B_1^u|$ goods from $B_1^u$, and 
            \item $\ell - |B_1^u|$ goods from $B_1^* \setminus B_1^u$. 
        \end{enumerate}
        Now, consider another instance $\widetilde{\mathcal{I}} = (N, G \cup \{g^*\}, \widetilde{\profile})$ constructed from $\mathcal{I}^*$ by declaring some $|B_2^u|$ goods in $B_1^b \cup B_2$ to be unvalued by agent~$1$.
        We choose any such $|B_2^u|$ goods from the set $(B_1^b \cup B_2) \setminus S$.
        Note that $|B_1 \cup B_2| = 2\ell$, and so $|(B_1^b\cup B_2) \setminus S| = |(B_1 \cup B_2) \setminus S| = 2\ell - \ell = \ell$.
        Since $|B_2^u| \leq \ell$ by \eqref{eqn:B10B20_leqk}, such $|B_2^u|$ goods in $(B_1^b \cup B_2) \setminus S$ exist.
        Let $\widetilde{\mathcal{B}} = \therule(\widetilde{\instance})$ be the corresponding allocation returned by $\therule$ on the instance $\widetilde{\instance}$.

        The characteristic tuple of $\widetilde{\profile}$ is
        \begin{equation*}
            \mathcal{C}(\widetilde{\profile}) = (G \cup \{g^*\}, |B^b_1| + |B^b_2|, |B_1^u|, |B_2^u|).
        \end{equation*}
        The valuations of both agents in $\widetilde{\instance}$ are illustrated as follows, together with the goods identified with respect to the bundles $B_1^u$ and $B^b_1 \cup B_2$ and the set $S$.
        Note that $\mathcal{B}$ does \emph{not} denote the allocation returned by $\therule$ in this instance, and the order of goods in $B_1^b \cup B_2$ may differ from that illustrated in $\profile^*$ above.

        \begin{center}
        \begin{tabular}{ c | c c c c c c c c c c c c c c}
        $\widetilde{\profile}$
         & \multicolumn{3}{c}{$\overbrace{\smash{}\hspace{1.8cm}\smash{}}^{B_1^u}$} & 
         \multicolumn{9}{c}{$\overbrace{\smash{}\hspace{6.2cm}\smash{}}^{B_1^b \cup B_2}$}\\
         \hline
         $1$ & $1$ & $\cdots$ & $1$& $1$ & $\cdots$ & $1$ & $1$ & $\cdots$ & $1$ & $\mathbf{0}$ & $\cdots$ & $\mathbf{0}$  \\
         $2$ & & & & $1$ & $\cdots$ & $1$ & $1$ & $\cdots$ & $1$ & $1$ & $\cdots$ & $1$ \\ 
         \multicolumn{1}{c}{} & \multicolumn{6}{c}{$\underbrace{\smash{}\hspace{4cm}\smash{}}_{S}$}  & & & & \multicolumn{3}{c}{$\underbrace{\smash{}\hspace{1.8cm}\smash{}}_{|B_2^u| \text{ goods}}$}
        \end{tabular}
        \end{center}
        
        Since $\mathcal{C}(\widetilde{\profile}) = \mathcal{C}(\profile^{\kappa+1})$, by neutrality, we must have that
        \begin{equation*}
            |\widetilde{B}_1| = |B_1| = \ell-1 \quad \text{and} \quad |\widetilde{B}_2| = |B_2| = \ell+1.
        \end{equation*}
        Moreover, recall that $|B^*_1| \geq \ell$.
        Comparing instance $\widetilde{\mathcal{I}}$ to $\mathcal{I}^*$, we observe that the rule is not strategyproof---agent~$1$ can misreport her valuations and obtain a strictly higher value, giving us a contradiction. 
\end{innerproof}
\Cref{lem:MNW-induction} together with induction implies \Cref{thm:n=2}.
\end{proof}

The proof of \Cref{thm:n=2} does not readily generalize to $n > 2$. 
Indeed, assuming neutrality and minimal completeness, a profile for two agents (with binary valuations) can be described by three numbers. 
Our proof modifies instances while keeping these three numbers fixed, then invokes strategyproofness and resource‑monotonicity to derive a contradiction. 
With $n \geq 3$ agents, $2^n - 1$ numbers are required to describe a profile, which makes it difficult to construct a similar proof.

Next, we present a result showing that any rule satisfying five of the six properties (we do not require non-redundancy here) and maximizing Nash welfare must consistently break ties in tiebreak-relevant instances with the same number of valued goods.
As an immediate corollary, the same must hold for any rule satisfying all six properties, by Theorem \ref{thm:n=2}.
Note that because of resource-monotonicity, this tie-breaking is irrespective of the total number of goods.
For example, given a tiebreak-relevant instance, if we add an unvalued good to it, resource-monotonicity implies that both agents' utilties---and therefore the tie-breaking---must remain the same.
The tie-breaking requirement here is more restrictive than in \Cref{thm:tiebreaking_1stchar}, where consistent tie-breaking is only demanded for instances with the same number of goods (in addition to having the same number of valued goods).

\begin{theorem} \label{thm:tie-breaking_alternate}
    Let $n = 2$, and let $\mathcal F$ be a rule that is \emph{EF1}, minimally complete, resource-monotone, strategyproof, neutral, and maximizes Nash welfare.
    Then, for each $m_v$, there exists an agent~$i \in \{1,2\}$ such that for every tiebreak-relevant instance with $m_v$ valued goods,  $\therule$ chooses an MNW allocation favoring agent~$i$.
\end{theorem}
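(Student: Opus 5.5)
Suppose, for contradiction, that there are two tiebreak-relevant instances $\instance$ and $\instance'$ with the same number $m_v = 2r+1$ of valued goods, where $\therule(\instance)=\allocation$ favors agent~$1$ and $\therule(\instance')=\allocation'$ favors agent~$2$. Without IDU we cannot strip the profiles down to canonical ones as in \Cref{thm:tiebreaking_1stchar}. Instead, we pass through a single canonical instance in which the whole tie-breaking question is decided, and use strategyproofness and resource-monotonicity to carry the favoritism there.

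\textbf{Step 1: remove unvalued goods.} First we discard unvalued goods. Adding or removing an unvalued good leaves both values unchanged: minimal completeness leaves it unallocated, and resource-monotonicity applied in both directions pins down the values. So we may assume both instances have exactly $m_v$ goods, all valued; the favoritism, being a statement about values, is preserved. Since both instances now use the same good set up to relabeling, neutrality will be available.

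\textbf{Step 2: pass to the canonical instance.} The canonical instance is $\instance^*$, in which both agents approve all $2r+1$ goods. Its MNW allocations split the goods $r+1$ versus $r$, and by neutrality every relabeling yields the same split. Hence $\therule$ favors a fixed agent $i^*$ at $\instance^*$, say $i^*=2$. We aim to show that $\instance$ then also favors agent~$2$. Starting from $\instance$ (where agent~$1$ gets $r+1$), we let agent~$2$ add to her report, one at a time, the goods of $A_1^u$. Strategyproofness at the truthful profile bounds how much of her true approval set she can gain from each misreport. Combined with the fact that $\therule$ maximizes Nash welfare and the characteristic-tuple argument from \Cref{thm:n=2} (profiles with equal characteristic tuples receive equal values by neutrality), we claim that agent~$2$'s value stays at $r$ throughout. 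Symmetrically, agent~$1$ adds $A_2^u$ to her report. Once neither agent has uniquely valued goods, the profile is $\instance^*$, where agent~$1$ receives $r+1$, contradicting $i^*=2$.

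\textbf{Step 3: the main obstacle.} The hardest step is justifying that the favoritism is preserved at each single approval added in Step~2. Strategyproofness only constrains agent~$2$'s value measured by her true preferences, so it controls $|A_2^{\text{new}}\cap P_2^{\text{old}}|\le r$. Agent~$2$ could still end up with $r+1$ goods if she newly receives the good she just added. Two tools handle this case. First, we apply EF1 and Nash-welfare maximality to see that such a swap forces a tie-break flip within a profile class. Second, we apply neutrality by permuting the newly added good with an originally shared good, as in the proof of \Cref{lem:induction}, but using resource-monotonicity (deleting and re-adding that good) in place of IDU. Should this direct approach stall, we would instead reason downward: remove goods from $\instance^*$ until an instance with the same characteristic tuple as $\instance$ is reached, using resource-monotonicity to track which agent loses value, and invoke neutrality at the end.
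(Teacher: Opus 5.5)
\textbf{There is a genuine gap at exactly the step you flag in Step~3, and the tools you name do not close it.}

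Take a step of your walk in which agent~$2$, who is disadvantaged at $\mathcal{I}$, enlarges her report from a profile $Q$ to $Q'$ by a good $g\in A_1^u$. Strategyproofness at $Q$ only gives $|A_2'\cap P_2^Q|\le r$, where $\mathcal{A}'=\therule(Q')$. So nothing stops $\therule(Q')$ from giving her $g$ plus $r$ old goods.

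\begin{itemize}
\item Your first tool (EF1 plus Nash-welfare maximality) only restates this possibility.
\item Your second tool, resource-monotonicity in place of IDU, fails. Delete a shared good $h\in A_1'$ from $Q'$. The resulting instance has $2r$ valued goods and still admits a balanced split, so Nash-welfare maximization forces values $(r,r)$ there. Re-adding $h$ as a good approved only by agent~$1$ is then compatible with either agent receiving the extra unit, so no contradiction arises.
\item Your fallback of removing goods from $\mathcal{I}^*$ changes the number of valued goods. It can therefore never reach an instance with the same characteristic tuple as $\mathcal{I}$.
\end{itemize}

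Your two expansions are also not symmetric. Agent~$1$ enlarging her report to all goods is settled at once by strategyproofness at the enlarged profile: misreporting $P_1$ would secure her $r+1$ goods she values. That is exactly the paper's first step.

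\textbf{The missing idea} is to apply strategyproofness at a profile where the manipulation yields only truly valued goods, using neutrality to choose which approvals are dropped. The paper never lets the disadvantaged agent expand. At $\mathcal{I}^*$, agent~$2$ gets $r+1$ goods. Let her disapprove $|A_1^u|\le r$ goods lying in agent~$1$'s bundle $A_1^*$. Strategyproofness, against reporting all goods, forces her to still receive $r+1$. But this profile has the same characteristic tuple as $\mathcal{I}$ with agent~$1$ expanded to approve all goods, where agent~$2$ gets $r$. Neutrality then gives the contradiction.

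Alternatively, you can repair your stepwise walk by using strategyproofness instead of resource-monotonicity:
\begin{itemize}
\item Suppose $\therule(Q')$ gives agent~$2$ the good $g$ and $r$ further goods.
\item Pick $h\in A_1'\cap P_2'$. It exists because $|P_2'|\ge r+2$, by tiebreak-relevance of $\mathcal{I}$.
\item Let $Q''$ be $Q'$ with agent~$2$ disapproving $h$. Then $Q''$ is $Q$ with $g$ and $h$ swapped.
\item By neutrality, agent~$2$ has value $r$ at $Q''$.
\item Yet reporting $Q'$ would give her $A_2'\subseteq P_2''$, of size $r+1$, which violates strategyproofness.
\end{itemize}
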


\begin{proof}
    Due to minimal completeness and resource-monotonicity, each agent's bundle value should remain the same upon adding or removing unvalued goods.
    Thus, without loss of generality, we can assume that all goods are valued in every instance that we consider.

    Suppose for contradiction that there exists a tiebreak-relevant instance with $m$ valued goods such that $\therule$ returns an MNW allocation favoring agent~$1$, and another instance such that $\therule$ returns an MNW allocation favoring agent~$2$. 
    As tie-breaking is never relevant for an even number of valued goods, we may let $m = 2k+1$.
    Consider an instance $\mathcal{I}^*$ with $m$ valued goods, where all goods are valued by both agents, and let $\therule(\instance^*) =\mathcal{A}^*$.
    Since $\mathcal{A}^*$ is EF1, one agent must receive a value of $k+1$ and the other agent a value of $k$.
    Without loss of generality, assume that
    \begin{equation*}
        |A^*_1| = k \quad \text{and} \quad |A^*_2| = k+1.
    \end{equation*}

    Let $\instance$ be a tiebreak-relevant instance with $m$ valued goods such that $\therule(\instance) = \allocation$ is an MNW allocation favoring agent~$1$.
    We have  
    \begin{equation*}
        |A_1| = k+1 \quad \text{and} \quad |A_2| = k.
    \end{equation*}
    Since this instance is tiebreak-relevant, $|A_1^b| \geq 1$ (the notation $A_1^b$ is defined before \Cref{thm:n=2}).
    This means that
    \begin{equation} \label{eqn:Ai0Aj'0_k}
        |A^u_1| \leq k.
    \end{equation}
    
    Next, consider the modified instance $\widehat{\mathcal{I}} = (N, G, \widehat{\profile})$ derived from $\mathcal{I}$ by letting agent~$1$ value all goods in $A^u_2$, i.e., $\widehat{P}_1 = A_1 \cup A_2$ and $\widehat{P}_2 = P_2$.
    Let $\widehat{\mathcal{A}} = \mathcal{F}(\widehat{\mathcal{I}})$.
    If $|\widehat{A}_1| = k$, agent~$1$ can strictly benefit by misreporting her valuations from $\widehat{\profile}$ to~$\profile$, violating strategyproofness.
    Hence,
    \begin{equation*}
        |\widehat{A}_1| = k+1 \quad \text{and} \quad |\widehat{A}_2| = k.
    \end{equation*}

    Now, consider the instance derived from $\mathcal{I}^*$ by letting agent~$2$ have value $0$ for some $|A_1^u|$ goods in $A^*_1$; the existence of such goods is guaranteed by (\ref{eqn:Ai0Aj'0_k}).
    Then, by neutrality with $\widehat{\mathcal{I}}$, agent~$1$ should get a value of $k+1$ and agent~$2$ a value of $k$.
    However, comparing this instance to $\mathcal{I}^*$, agent~$2$ can obtain a strictly higher value (with respect to the goods that she values in this instance) by misreporting her valuations.
    This violates strategyproofness, a contradiction.
\end{proof}

Finally, we prove that all six properties are necessary for our characterization.
In particular, if we were to drop any one of EF1, non-redundancy, strategyproofness, neutrality, or minimal completeness, the allocation rule may no longer maximize Nash welfare. 
On the other hand, if we were to drop resource-monotonicity, then the allocation rule may fail to break ties consistently as mandated by \Cref{{thm:tie-breaking_alternate}}.

\begin{proposition}\label{prop:n=2_alt_char_four}
    Under binary valuations, given only five of the properties in \{EF1, non-redundancy, strategyproofness, resource-monotonicity, neutrality, minimal completeness\}, there exists an allocation rule for $n=2$ that satisfies the five properties but either does not maximize Nash welfare or fails the tie-breaking requirement set out in \Cref{thm:tie-breaking_alternate}.
\end{proposition}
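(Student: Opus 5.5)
We need six counterexample rules for $n=2$, one for each dropped property. For five of them the rule will fail to maximize Nash welfare. When resource-monotonicity is dropped, the rule will still be MNW but will break ties inconsistently.

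\emph{Dropping EF1.} Use the serial-dictatorship-like rule: discard unvalued goods, give agent~$1$ every good she approves, and give agent~$2$ the remaining valued goods, which he approves. This is minimally complete, non-redundant and neutral, since values depend only on $|P_1|$ and $|P_2\setminus P_1|$. It is strategyproof: agent~$1$ is already maximal, and agent~$2$ receives exactly $P_2\setminus P_1$ whatever he reports. It is resource-monotone, since adding a good never decreases either set. It fails MNW on a single-agent-approves-all instance with two goods approved by both agents.

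\emph{Dropping non-redundancy.} Take \mnwtie{} but allow a valued good to go to an agent who does not value it. Concretely: whenever exactly one good is valued, and only by agent~$1$, give it to agent~$2$; otherwise run \mnwtie. EF1, minimal completeness and neutrality are immediate. Strategyproofness and resource-monotonicity must be checked carefully around the one-good case, and this check is the delicate part. If it fails, I would instead try the rule that gives a good valued only by agent~$1$ to agent~$2$ exactly when $m_v=1$, and verify resource-monotonicity against adding goods.

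\emph{Dropping minimal completeness.} Use \mnwtie{} restricted to goods approved by both agents, with singly valued goods left unallocated. Alternatively, allocate nothing beyond a fixed threshold; the simplest candidate is the empty allocation. The empty allocation is trivially EF1, non-redundant, strategyproof, resource-monotone and neutral, yet it is not MNW whenever some good is valued.

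\emph{Dropping neutrality.} Fix good $g_1$ (by label) to always go to agent~$1$ if she approves it, and allocate the rest by \mnwtie{} applied to the residual instance. This rule is not MNW when agent~$1$ approves only $g_1$, agent~$2$ approves only $g_1$, and agent~$2$ has nothing else. In that case agent~$2$ gets zero, but MNW also gives zero to one agent there, so a better choice is needed. I would use an instance where agent~$2$ approves only $g_1$ and agent~$1$ approves $g_1,g_2$. The remaining axioms will be verified in the same way as for \mnwtie.

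\emph{Dropping strategyproofness.} Use the variant of MNW that breaks ties but allocates non-MNW when $m_v$ is large. Alternatively, use round-robin with agent~$1$ first among agents on approved goods, taking the lowest index, and made neutral by picking by counts: agent~$1$ takes a uniquely valued good if available, else a shared one, and so on. Round-robin is EF1, non-redundant, minimally complete and resource-monotone for two agents, but it is manipulable and not MNW.

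\emph{Dropping resource-monotonicity.} Use the MNW rule with tie-breaking in favor of agent~$1$ when $m$ is odd and agent~$2$ when $m$ is even. Neutrality, strategyproofness and IDU-type arguments go through as in \Cref{thm:tiebreaking_1stchar}, but the rule violates \Cref{thm:tie-breaking_alternate}.

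The main obstacle is verifying strategyproofness and resource-monotonicity for the non-redundancy and neutrality counterexamples.
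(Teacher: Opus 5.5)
Three of your six rules work. Two do not, and one needs repair.

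\textbf{Dropping strategyproofness.} Your neutral round-robin is not a counterexample. In it, each agent takes a good only she approves while one remains, and otherwise a shared good. For two agents this rule actually maximizes Nash welfare. Write $u_i=|P_i\setminus P_j|$ and $s=|P_1\cap P_2|$, and say $u_1\le u_2$. After round $u_1$, agent~1 takes shared goods while agent~2 is still taking her own. If $s\le u_2-u_1$, all shared goods go to agent~1, who ends with $u_1+s\le u_2$. Otherwise both agents stand at $u_2$ after round $u_2$, and the remaining shared goods alternate starting with agent~1. The case $u_1>u_2$ is symmetric. This is exactly the MNW allocation with ties broken for agent~1, so the rule's values coincide with \mnwtie{} favoring agent~1. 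It is therefore strategyproof and MNW. Your other suggestion is not specified. You need a rule that deviates in a small instance. For example, take \mnwtie{} favoring agent~1, except that when there are exactly two valued goods, both approved by agent~1 and one by agent~2, both go to agent~1. This rule is EF1, non-MNW, and agent~2 can profit by misreporting.

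\textbf{Dropping non-redundancy.} Keying the exception on $m_v=1$ breaks strategyproofness. Take goods $g_1,g_2$ with $P_1=\{g_1\}$ and $P_2=\emptyset$. Agent~1 gets nothing, but if she reports $\{g_1,g_2\}$, two goods are valued and she receives both, for a true value of $1$. Also, if ties favor agent~1, then with one good approved by both agents, agent~2 gets nothing and can report $\emptyset$ to trigger the exception. Your fallback is the same rule. The fix is to apply the exception only when $m=1$ and to run \mnwtie{} favoring agent~2. Then at $m=1$ agent~1 never receives the good, and agent~2 receives it whenever she approves it, so no manipulation is possible. Resource-monotonicity holds because the exceptional instance has values $(0,0)$.

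\textbf{Dropping neutrality.} With the default \mnwtie{} (ties to agent~1) on the residual, your $g_1$-priority rule violates EF1. If four goods are all approved by both agents, agent~1 gets $g_1$ plus two of the other three, giving values $(3,1)$. Breaking ties for agent~2 in both branches should repair it. Agent~2 then never envies agent~1 within the residual MNW allocation: otherwise she could take one good from agent~1 to obtain an equally good allocation favoring her. So removing $g_1$ eliminates her envy. Strategyproofness for agent~1 then follows from strategyproofness and resource-monotonicity of \mnwtie{}, since one extra valued good raises each value by at most one. This must be argued; it does not carry over ``as for \mnwtie{}.''

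\textbf{The rules that work.} Your EF1 rule is the paper's. The empty rule is a valid counterexample for minimal completeness, and simpler than the paper's $m=1$ exception. Parity-of-$m$ tie-breaking is a valid counterexample for resource-monotonicity. However, its strategyproofness should be justified by noting that the favored agent depends only on $m$, which no report can change. The discussion after \Cref{thm:tiebreaking_1stchar} does not cover strategyproofness.
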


The proof of \Cref{prop:n=2_alt_char_four} can be found in Appendix~\ref{app:independence-two}.

\section{Conclusion and Future Work}

In this work, we studied the setting of fair division with binary valuations and presented characterizations of the rule that coincides with the maximum Nash welfare (MNW) rule, the leximin rule, as well as any additive welfarist rule with a strictly concave function.
Our main result shows that for any number of agents, this rule is the unique rule satisfying EF1, strategyproofness, neutrality, minimal completeness, and IDU.
We also provide an alternative characterization for the case of two agents, by replacing IDU with non-redundancy and resource-monotonicity.

Besides extending \Cref{thm:n=2} to $n > 2$ or obtaining characterizations using other sets of axioms, a natural future direction is to characterize MNW or the leximin rule among all rules for additive valuations.
As discussed in \Cref{sec:intro}, this appears to be a challenging task---for example, several properties fulfilled by MNW in the binary setting are violated by the rule in the additive setting.
More broadly, while characterizations are commonplace in the social choice literature, they are still few and far between in fair division despite the recent surge of interest in the area.
We believe that characterizing other prominent fair division rules---such as the classic \emph{round-robin algorithm} or the influential \emph{envy cycle elimination algorithm} \citep{lipton2004ece}---is a direction worthy of further investigation.

\section*{Acknowledgments}
 This work was partially supported by the Singapore Ministry of Education under grant number MOE-T2EP20221-0001, by the DFG under the Excellence Strategy EXC-2047, and by an NUS Start-up Grant.

\bibliographystyle{plainnat}
\bibliography{main}

\appendix
\section*{APPENDIX}
\section{Proof of \Cref{prop:non-uniqueness_4of5}}
\label{app:independence-main}

We show that for each of the five properties (EF1, minimal completeness, neutrality, IDU, strategyproofness), there exists an allocation rule for $n = 2$ that does not maximize Nash welfare and fails only that property among the five.
We include an example demonstrating the violation of each property (and MNW) by the proposed rule, and show that the rule satisfies the remaining four properties.
\begin{itemize}
    \item \textbf{EF1}: A rule that discards all unvalued goods, allocates goods valued by both agents to agent~$1$, and allocates each of the remaining goods to the agent who uniquely values the good. This is also equivalent to the utilitarian rule (with lexicographic tie-breaking).
    \begin{tcolorbox}[breakable]
        To see that the rule is not EF1, consider an instance with $m =2$ goods that are valued by both agents.
        Then, the rule will allocate both goods to agent~$1$, violating EF1.
        The same example shows that the rule returns an allocation that is not MNW.
    \end{tcolorbox}
    
    This rule satisfies:
    \begin{itemize}
        \item \textbf{Minimal completeness.} Every valued good is allocated to an agent who values it, and every unvalued good is discarded.
        \item \textbf{Neutrality.} It is clear that the agents' received values do not depend on the labeling of the goods.
        \item \textbf{IDU.} Suppose an agent ceases to approve a good~$g$ that she did not receive.
        Then, this agent must be agent~$2$, and $g$ is approved by (and allocated to) agent~$1$.
        Upon agent~$2$ disapproving~$g$, the allocation that the rule returns does not change.
        \item \textbf{Strategyproofness.} Agent~$1$ already receives all the goods that she approves, so she cannot increase her value by misreporting. 
        Agent~$2$ receives exactly the goods that she uniquely approves.
        No matter how agent~$2$ reports, she cannot get the goods that agent~$1$ values.
        Therefore, agent~$2$ also cannot increase her value by misreporting.
    \end{itemize}
    \item \textbf{Minimal completeness}: The \mnwtie{} rule favoring agent~$1$, with one exception: if $m=1$, then always discard the good.
    \begin{tcolorbox}[breakable]
        To see that the rule is not minimally complete, observe that it does not allocate a valued good when $m=1$ and both agents value the good.
        This allocation is also not MNW.
    \end{tcolorbox}
    We know that the \mnwtie{} rule favoring agent~$1$ satisfies all the properties (and in particular, the four other properties). Therefore, it remains to show that for instances where $m=1$, the rule that always discards the good satisfies:
    \begin{itemize}
        \item \textbf{EF1.} The returned allocation is $(\varnothing,\varnothing)$, which satisfies EF1 trivially.
        \item \textbf{Neutrality.} With one good, the labeling is unique, and neutrality trivially holds.
        \item \textbf{IDU.} Since the rule always discards the good, the allocation is the same regardless of the agents' reports, and IDU trivially holds.
        \item \textbf{Strategyproofness.} Since the rule always discards the good, both agents always receive a value of~$0$.
    \end{itemize}
            
    \item \textbf{Neutrality}: The \mnwtie{} rule favoring agent~$1$, with the exception that when $m=2$ and the profile is either of the following $\profile_1$ or $\profile_2$, it returns the indicated allocation.
    \begin{center}
        \begin{tabular}{ c | c c }
            $\profile_1$ & $g_1$ & $g_2$\\ 
             \hline
             $1$ & \circled{$1$} & \circled{$1$} \\  
             $2$ & $0$  & $1$ \\
            \end{tabular}
            \quad
            and
            \quad
            \begin{tabular}{ c | c c }
             $\profile_2$ & $g_1$ & $g_2$\\ 
             \hline
             $1$ & $1$ & \circled{$1$} \\  
             $2$ & \circled{$1$}  & $1$ \\
            \end{tabular}
        \end{center}
        \begin{tcolorbox}[breakable]
            To see that the rule is not neutral, consider the following profile:
        \begin{center}
            \begin{tabular}{ c | c c }
            $\profile$ & $g_1$ & $g_2$\\ 
             \hline
             $1$ & $1$ & \circled{$1$} \\  
             $2$ & \circled{$1$} & $0$ \\
            \end{tabular}
        \end{center}
            Then, the \mnwtie{} rule favoring agent~$1$ will return the above indicated allocation, giving both agents $1$ and $2$ a value of $1$ each.
            However, $\profile = \pi(\profile_1)$ for some permutation $\pi$ of $G$, and by neutrality the value that each agent gets in both instances should be equal, which is not the case. The allocation returned by the rule for $\profile_1$ is not MNW.
        \end{tcolorbox}
        We know that the \mnwtie{} rule favoring agent~$1$ satisfies all the properties (and in particular, the four other properties).
        Therefore, it remains to show that for instances where the profile is either $\mathcal{P}_1$ or $\mathcal{P}_2$ (as defined above), the rule that returns the corresponding indicated allocation satisfies:
        \begin{itemize}
            \item \textbf{EF1.}
            Under $\mathcal{P}_1$, $v_2(A_1) = v_2(g_2) = 1$, and so $v_2(A_1 \setminus \{g_2\}) = 0$. Thus, agent~$2$ does not envy agent~$1$ after removing one good, and EF1 is satisfied. Under $\mathcal{P}_2$, each agent receives exactly one good, and EF1 is again satisfied.
            \item \textbf{Minimal completeness.}
            In both profiles, both goods are valued, and both are also allocated.
            \item \textbf{IDU.}
            Under $\mathcal{P}_1$, if agent~$2$ ceases to approve $g_2$, then \mnwtie{} favoring agent~$1$ will return the same allocation (i.e., both goods allocated to agent~$1$).
            Similarly, under $\mathcal{P}_2$, if agent~$1$ ceases to approve $g_1$ or agent~$2$ ceases to approve $g_2$, then \mnwtie{} will return the same allocation. 
            \item \textbf{Strategyproofness.} 
            Under $\mathcal{P}_1$, agent~$1$ is already allocated both goods, so she has no incentive to misreport.
            Agent~$2$ has no way to misreport so that she receives $g_2$; in particular, if she reports that she values both goods, then the profile becomes $\mathcal{P}_2$, and agent~$2$ receives only $g_1$.
            Also, if agent~$1$ actually approves only one good, then the rule already gives her that good, so she has no incentive to report that she values both goods so that the profile becomes $\mathcal{P}_1$.
            Now, consider $\mathcal{P}_2$.
            Each agent already receives a value of~$1$, and there is no way either agent can misreport in order to obtain both goods.
            Also, if either agent actually approves one good, then the rule already gives her that good (except in $\mathcal{P}_1$, which we already handled), so the agent has no incentive to report that she values both goods so that the profile becomes $\mathcal{P}_2$.
        \end{itemize}
        
        \item \textbf{IDU}: The \mnwtie{} rule favoring agent~$2$, with the exception that when $m=1$ and only agent~$1$ values the good, allocate it to agent~$2$.
        \begin{tcolorbox}[breakable]
            To see that the rule is not IDU, consider the profile where $m=1$ and the good is unvalued.
            Then, \mnwtie{} favoring agent~$2$ will discard the good.
            However, IDU states that the allocation should remain the same as when only agent~$1$ values the good, and thus the rule does not satisfy it. The allocation returned by the rule when only agent~$1$ values the good is not MNW.
        \end{tcolorbox}
        We know that the \mnwtie{} rule favoring agent~$2$ satisfies all the properties (and in particular, the four other properties).
        Therefore, it remains to show that for the instance where $m=1$ and only agent~$1$ values the good, the rule that allocates the good to agent~$2$ satisfies:
    \begin{itemize}
        \item \textbf{EF1.} 
        Agent~$1$'s envy is eliminated by removing the single good in agent~$2$'s bundle, so EF1 is satisfied.
        \item \textbf{Minimal completeness.} 
        There is only one good, which is valued and allocated.
        \item \textbf{Neutrality.}
        With one good, the labeling is unique, and neutrality trivially holds.
        \item \textbf{Strategyproofness.}   
        Agent~$2$ already receives the good and has no incentive to misreport.
        If agent~$1$ misreports her value for the good (i.e., $0$ rather than~$1$), then \mnwtie{} favoring agent~$2$ will discard the good, and agent~$1$'s value remains unchanged.
        On the other hand, if agent~$2$ actually approves the good, then \mnwtie{} favoring agent~$2$ already gives her the good, so she has no incentive to misreport.
        Likewise, if agent~$1$ actually disapproves the good, she has no incentive to misreport.
    \end{itemize}
        \item \textbf{Strategyproofness}:
        The \mnwtie{} rule favoring agent~$2$, with one exception: when $m=5$, $|P_1| \geq 4$, $|P_2| = 2$, and all goods are valued, return any allocation $(A_1, A_2)$ with $|A_1| = v_1(A_1) = 4$ and $|A_2| = v_2(A_2) = 1$.
        \begin{tcolorbox}[breakable]
        To see that the rule is not strategyproof, consider the following two profiles:
        \begin{center}
        \begin{tabular}{ c | c c c c c }
            $\profile_1$ & $g_1$ & $g_2$ & $g_3$ & $g_4$ & $g_5$\\ 
             \hline
             $1$ & $0$ &$0$ & \circled{$1$} & $1$ & $0$ \\  
             $2$ & $0$ & $0$ & $0$ & \circled{$1$} & \circled{$1$} \\
        \end{tabular}
            \quad
            and
            \quad
        \begin{tabular}{ c | c c c c c }
            $\profile_2$ & $g_1$ & $g_2$ & $g_3$ & $g_4$ & $g_5$\\ 
             \hline
             $1$ & \circled{$\mathbf{1}$} & \circled{ $\mathbf{1}$} & \circled{$1$} & \circled{$1$} & $0$ \\  
             $2$ & $0$ & $0$ & $0$ & $1$ & \circled{$1$} \\
        \end{tabular}
        \end{center}
        For $\profile_1$, the \mnwtie{} rule favoring agent~$2$ will return the above indicated allocation, giving agent~$1$ a value of $1$.
        For $\profile_2$, the rule will return the indicated allocation, thereby giving agent~$1$ a value of $2$ with respect to her true preference $\profile_1$. This allocation is also not MNW.
        \end{tcolorbox}
        We know that the \mnwtie{} rule favoring agent~$2$ satisfies all the properties (and in particular, the four other properties).
        Therefore, it remains to show that for instances where $m=5$, $|P_1| \geq 4$, $|P_2| = 2$, and all goods are valued, the rule as described above satisfies:
    \begin{itemize}
        \item \textbf{EF1.} 
        Agent~$2$ values two of the five goods, and she receives one of these two goods. 
        Agent~$1$ already receives four valued goods.
        Thus, EF1 is satisfied.
        \item \textbf{Minimal completeness.} 
        All goods are valued and allocated.
        \item \textbf{Neutrality.} 
        By description of the rule, the agents' received values do not depend on the labeling of the goods.
        \item \textbf{IDU.}
        If agent~$2$ ceases to approve the (single) good that she values but did not receive, the rule will return the same allocation.
        For agent~$1$, when $|P_1| = 4$, the agent already receives all of her valued goods; when $|P_1| = 5$, if the agent ceases to approve the (single) good that she values but did not receive, the rule will return the same allocation.
        Also, if agent~$2$ actually approves at least three goods, then \mnwtie{} favoring agent~$2$ will give her at least three valued goods, so even if she ceases to approve a good that she values but did not get, in the new profile she will value at least three goods.
    \end{itemize}
\end{itemize}

\section{Proof of \Cref{prop:n=2_alt_char_four}}
\label{app:independence-two}

We show that for each of the six properties (EF1, non-redundancy, minimal completeness, neutrality, resource-monotonicity, strategyproofness), there exists an allocation rule for $n = 2$ that does not maximize Nash welfare (or in the case of resource-monotonicity, fails the tie-breaking requirement set out in \Cref{thm:tie-breaking_alternate}) and fails only that property among the six.
We include an example demonstrating the violation of each property (and MNW) by the proposed rule.
\begin{itemize}
    \item \textbf{EF1}: We use the same rule that violates EF1 in the proof of Proposition~\ref{prop:non-uniqueness_4of5}. 
    We have already shown that this rule satisfies minimal completeness, neutrality, and strategyproofness, but does not maximize Nash welfare. We now show that it satisfies:
    \begin{itemize}
        \item \textbf{Non-redundancy.}
        Every allocated good is given to an agent who values it.
        \item \textbf{Resource-monotonicity.}
        When an extra good is added, the original goods are allocated in the same way as before, so no agent's value decreases.
    \end{itemize}
    \item \textbf{Non-redundancy}:
    We use the same rule that violates IDU in the proof of Proposition~\ref{prop:non-uniqueness_4of5}.
    \begin{tcolorbox}[breakable]
        To see that the rule fails non-redundancy, observe that when $m=1$ and the good is only valued by agent~$1$, agent~$2$ receives a good that she does not value. The allocation returned by the rule in this case is not MNW.
    \end{tcolorbox}
    We have already shown that this rule satisfies EF1, minimal completeness, neutrality, and strategyproofness, but does not maximize Nash welfare.
    Therefore, it remains to show that for the instance where $m=1$ and only agent~$1$ values the good, the rule that allocates the good to agent~$2$ satisfies:
    \begin{itemize}
        \item \textbf{Resource-monotonicity.}
        Since both agents have a value of $0$, adding an extra good cannot reduce either agent’s value. Thus, resource-monotonicity holds.
    \end{itemize}
    \item \textbf{Minimal completeness}: The \mnwtie{} rule favoring agent~$1$, with one exception: if $m=1$ and both agents value the good, discard it.
    \begin{tcolorbox}[breakable]
        To see that the rule is not minimally complete, observe that it does not allocate a valued good when $m=1$ and both agents value the good.
        This allocation is also not MNW.
    \end{tcolorbox}
    We know that the \mnwtie{} rule favoring agent~$1$ satisfies all the properties (and in particular, the five other properties).
    Therefore, it remains to show that for the instance where $m=1$ and both agents value the good, the rule that discards the good satisfies:
    \begin{itemize}
        \item \textbf{EF1.} The returned allocation is $(\varnothing, \varnothing)$, which satisfies EF1 trivially.
        \item \textbf{Non-redundancy.}
        Since each agent's bundle is empty, non-redundancy trivially holds.
        \item \textbf{Neutrality.}
        With one good, the labeling is unique, and neutrality trivially holds.
        \item \textbf{Resource-monotonicity.}
        Since both agents have a value of $0$, adding an extra good cannot reduce either agent’s value. Thus, resource-monotonicity holds.
        \item \textbf{Strategyproofness.}
        If either agent misreports her value for the good (i.e., $0$ instead of $1$), then \mnwtie{} will allocate the good to the other agent. 
        Also, if only one agent actually approves the good, the agent who does not approve the good cannot gain by misreporting.
        
    \end{itemize}
    \item \textbf{Neutrality}: The \mnwtie{} rule favoring agent~$1$, with the exception that for all instances with two valued goods (denote these goods by $g_i$ and $g_j$, where $i < j$) and any number of unvalued goods:
    \begin{enumerate}[(i)]
        \item if both agents value $g_i$ and $g_j$, then allocate $g_i$ to agent~$2$ and $g_j$ to agent~$1$;
        \item if agent~$1$ values both $g_i$ and $g_j$, whereas agent~$2$ values only $g_j$, then allocate both $g_i$ and $g_j$ to agent~$1$.
    \end{enumerate}
    In both cases, the unvalued goods are discarded.
    \begin{tcolorbox}[breakable]
        To see that the rule is not neutral, consider the following three profiles (with no unvalued goods):
        \begin{center}
        \begin{tabular}{ c | c c }
            $\profile_1$ & $g_1$ & $g_2$\\ 
             \hline
             $1$ & \circled{$1$} & \circled{$1$} \\  
             $2$ & $0$  & $1$ \\
            \end{tabular}
            \quad
            and
            \quad
            \begin{tabular}{ c | c c }
             $\profile_2$ & $g_1$ & $g_2$\\ 
             \hline
             $1$ & $1$ & \circled{$1$} \\  
             $2$ & \circled{$1$}  & $1$ \\
            \end{tabular}
            \quad
            and
            \quad
            \begin{tabular}{ c | c c }
            $\profile_3$ & $g_1$ & $g_2$\\ 
             \hline
             $1$ & $1$ & \circled{$1$} \\  
             $2$ & \circled{$1$} & $0$ \\
            \end{tabular}
        \end{center}
            The rule will return the above indicated allocations, using (ii) for $\profile_1$, (i) for $\profile_2$, and the \mnwtie{} rule favoring agent~$1$ for $\profile_3$.
            However, $\profile_3 = \pi(\profile_1)$ for some permutation $\pi$ of $G$, so by neutrality the value that each agent gets in both of these instances  should be equal, which is not the case. The allocation returned by the rule for $\profile_1$ is not MNW.
    \end{tcolorbox}
    We know that the \mnwtie{} rule favoring agent~$1$ satisfies all the properties (and in particular, the five other properties).
    Therefore, it remains to show that for instances fulfilling (i) or (ii), the rule as described above satisfies:
    \begin{itemize}
        \item \textbf{EF1.}
        In case (i), each agent receives a value of $1$, whereas in case (ii), agent~$2$'s envy towards agent~$1$ disappears after removing $g_j$ from agent~$1$'s bundle.
        \item \textbf{Non-redundancy.}
        Every valued good is allocated to an agent who values it.
        \item \textbf{Minimal completeness.}
        Every valued good is allocated, and every unvalued good is discarded.
        \item \textbf{Resource-monotonicity.}
        Suppose an extra good $g$ is added.
        If $g$ is unvalued, then the rule will discard it and the allocation of valued goods remains unchanged.
        If $g$ is valued (either by both agents or only by one agent), the rule will return an allocation such that agent~$1$'s value is $2$ and agent~$2$'s value is $1$.
        Hence, no agent's value decreases.
        Also, it can be verified that if a good is removed, no agent's value increases.
        \item \textbf{Strategyproofness.}
        In case (i), each agent has a value of $1$.
        If either agent misreports by disapproving at least one of $g_i$ and $g_j$, and/or approving any subset of the unvalued goods, the rule still will not allocate both $g_i$ and $g_j$ to that agent, so the agent cannot strictly benefit. 
        In case (ii), since agent~$1$ is already allocated both valued goods, she cannot increase her value by misreporting. If agent~$2$ were to misreport by disapproving $g_j$ and/or approving $g_i$ and/or approving some subset of the unvalued goods, the rule still will not allocate $g_j$ to agent~$2$, so she also cannot strictly benefit.
        Moreover, it can be verified that there is no beneficial manipulation to a profile belonging to case (i) or (ii).
    \end{itemize}
    
    \item \textbf{Resource-monotonicity}: The \mnwtie{} rule favoring agent~$1$, with one exception: if $m = 2$ and one good is valued by both agents while the other good is unvalued, allocate the valued good to agent~$2$ and discard the unvalued good.
    \begin{tcolorbox}[breakable]
        To see that the rule is not resource-monotone, consider the instance where $m=1$ and the good is valued by both agents. Then, the \mnwtie{} rule favoring agent~$1$ will allocate this good to agent~$1$.
        When an unvalued good is added, agent~$1$'s bundle becomes empty, violating resource-monotonicity.
        Observe that these two instances have the same number of valued goods.
        However, tie-breaking is in favor of agent~$1$ in the instance with $m=1$ but in favor of agent~$2$ in the instance with $m=2$.
        This violates the tie-breaking requirement set out by \Cref{thm:tie-breaking_alternate}.
    \end{tcolorbox}
    We know that the \mnwtie{} rule favoring agent~$1$ satisfies all the properties (and in particular, the five other properties).
    Therefore, it remains to show that for an instance with $m=2$ where one good is valued by both agents and the other good is unvalued, the rule as described above satisfies:
    \begin{itemize}
        \item \textbf{EF1.}
        Agent~$1$'s envy towards agent~$2$ disappears after removing the single good in agent~$2$'s bundle.
        \item \textbf{Non-redundancy.} Since the valued good is allocated to agent~$2$ (who values it), non-redundancy holds.
        \item \textbf{Minimal completeness.} The valued good is allocated, and the unvalued good is discarded.
        \item \textbf{Neutrality.} It is clear that the agents' received values do not depend on the labeling of the goods.
        \item \textbf{Strategyproofness.}
        Without loss of generality, assume that $g_1$ is valued by both agents and $g_2$ is unvalued.
        If agent~$1$ misreports by disapproving $g_1$ and/or approving $g_2$, her value cannot increase under the rule (since $g_1$ will not be allocated to her in any of those cases).
        Agent~$2$ also cannot strictly benefit by misreporting, since she is already allocated the only good that she values.
        Moreover, it can be verified that there is no beneficial manipulation to this profile.
    \end{itemize}
    
    \item \textbf{Strategyproofness}: The \mnwtie{} rule favoring agent~$1$, with the exception that for all instances with two valued goods (denote these goods by $g_i$ and $g_j$) and any number of unvalued goods, such that agent~$1$ values both $g_i$ and $g_j$ and agent~$2$ values only one of them, allocate both valued goods to agent~$1$ and discard all unvalued goods.
    \begin{tcolorbox}[breakable]
        To see that the rule is not strategyproof, consider the following two profiles (with no unvalued goods):
        \begin{center}
            \begin{tabular}{ c | c c }
             $\profile_1$ & $g_1$ & $g_2$ \\ 
             \hline
             $1$ & \circled{$1$} & \circled{$1$} \\  
             $2$ & $0$ & $1$ \\
            \end{tabular}
            \qquad
            and
            \qquad
            \begin{tabular}{ c | c c }
             $\profile_2$ & $g_1$ & $g_2$ \\ 
             \hline
             $1$ & \circled{$1$} & $1$ \\  
             $2$ & $1$ & \circled{$1$} \\
            \end{tabular}
        \end{center}
        The rule will return the above indicated allocations.
        In $\profile_1$, agent~$2$ can manipulate by reporting $1$ on $g_1$ and increase her value from $0$ to $1$.
        The allocation returned by the rule for $\profile_1$ is not MNW.
    \end{tcolorbox}
    We know that the \mnwtie{} rule favoring agent~$1$ satisfies all the properties (and in particular, the five other properties).
    Therefore, it remains to show that for instances with two valued goods such that agent~$1$ values both of these goods and agent~$2$ values only one of them, the rule as described above satisfies:
    \begin{itemize}
        \item \textbf{EF1.}
        Since agent~$1$ is allocated both valued goods, she cannot envy agent~$2$. Since agent~$2$ only values one good (which is allocated to agent~$1$), by removing that good from agent~$1$'s bundle, any envy from agent~$2$ disappears.
        \item \textbf{Non-redundancy.}
        Both valued goods are allocated to agent~$1$ (who values them).
        \item \textbf{Minimal completeness.}
        Both valued goods are allocated, and all unvalued goods are discarded.
        \item \textbf{Neutrality.}
        It is clear that the agents' received values do not depend on the labeling of the goods.
        \item \textbf{Resource-monotonicity.}
        Suppose an extra good $g$ is added.
        If $g$ is unvalued, then the rule will discard it and the allocation of valued goods remains unchanged.
        If $g$ is valued (either by both agents or only by one agent), the rule will return an allocation such that agent~$1$'s value is $2$ and agent~$2$'s value is $1$.
        Hence, no agent's value decreases.
        Also, it can be verified that if a good is removed, no agent's value increases.
    \end{itemize}
\end{itemize}

\end{document}